\documentclass[12pt]{article}
\usepackage{amssymb,amsthm,amsfonts,cancel}
\usepackage[sumlimits,intlimits,namelimits]{amsmath}
\usepackage{makeidx}
\usepackage{geometry}
\usepackage[english]{babel}
\usepackage{graphicx}
\usepackage{multirow}
\usepackage{enumitem}
\usepackage{tikz-cd}
\usepackage{url}
\usepackage{quiver}
\usepackage{authblk}
\usepackage{color}
\usepackage{natbib}
\usepackage{todonotes}
\usepackage{notoccite}
\usepackage{subcaption}
\usepackage{float}

\usetikzlibrary{calc}
\usetikzlibrary{decorations.pathmorphing}

\tikzset{curve/.style={settings={#1},to path={(\tikztostart)
    .. controls ($(\tikztostart)!\pv{pos}!(\tikztotarget)!\pv{height}!270:(\mtikztotarget)$)
    and ($(\tikztostart)!1-\pv{pos}!(\tikztotarget)!\pv{height}!270:(\tikztotarget)$)
    .. (\tikztotarget)\tikztonodes}},
    settings/.code={\tikzset{quiver/.cd,#1}
        \def\pv##1{\pgfkeysvalueof{/tikz/quiver/##1}}},
    quiver/.cd,pos/.initial=0.35,height/.initial=0}

\tikzset{tail reversed/.code={\pgfsetarrowsstart{tikzcd to}}}
\tikzset{2tail/.code={\pgfsetarrowsstart{Implies[reversed]}}}
\tikzset{2tail reversed/.code={\pgfsetarrowsstart{Implies}}}
\tikzset{no body/.style={/tikz/dash pattern=on 0 off 1mm}}

\usepackage[mathscr]{eucal}
\usepackage{hyperref}

\newcommand\C{\mathscr C}

\newcommand\id{\mathrm{id}}
\newcommand\Poset{\mathbf{Poset}}
\newcommand\Grp{\mathbf{Grp}}
\newcommand\Top{\mathbf{Top}}
\newcommand\Vect{\mathbf{Vect}}
\newcommand\Rel{\mathbf{Rel}}

\newcommand\CA{\mathrm{CA}}

\newcommand\GCA{\mathrm{GCA}}
\newcommand\Set{\mathbf{Set}}
\newcommand\End{\mathrm{End}}
\newcommand\Aut{\mathrm{Aut}}

\newcommand\Res{\mathrm{Res}}
\newcommand\Hom{\mathrm{Hom}}
\newcommand\Obj{\mathrm{Obj}}

\newcommand\Sym{\mathrm{Sym}}

\newtheorem{theorem}{Theorem}[section]
\newtheorem{definition}[theorem]{Definition}
\newtheorem{corollary}[theorem]{Corollary}
\newtheorem{proposition}[theorem]{Proposition}
\newtheorem{lemma}[theorem]{Lemma}
\newtheorem{example}[theorem]{Example}

\title{A categorical framework for cellular automata}
\author{A. Castillo-Ramirez\footnote{Email: alonso.castillor@academicos.udg.mx}, A. Vazquez-Aceves, and A. Zaldivar-Corichi \\ Centro Universitario de Ciencias Exactas e Ingenier\'ias, Universidad de Guadalajara, M\'exico.}
\date{}

\begin{document}

\maketitle

\begin{abstract}
This paper proposes a generalized framework for cellular automata using the language of category theory, extending the classical definition beyond set-theoretic constraints. For an arbitrary category $\mathscr{C}$ with products, we define $\mathscr{C}$-cellular automata as morphisms $\tau : A^G \to B^G$ in $\mathscr{C}$, where the alphabets $A$ and $B$ are objects in $\mathscr{C}$ and the universe is a group $G$. We show that $\mathscr{C}$-cellular automata form a subcategory of $\mathscr{C}$ closed under finite products, and that they satisfy a categorical version of the Curtis-Hedlund-Lyndon theorem. For two arbitrary group universes $G$ and $H$, we extend our theory to define generalized $\C$-cellular automata as morphisms $\tau : A^G \to B^H$ constructed via a group homomorphism $\phi : H \to G$. Finally, we prove that generalized $\mathscr{C}$-cellular automata form a subcategory of $\mathscr{C}$ with a finite weak product involving the free product of the underlying group universes. This framework unifies existing concepts and provides purely categorical proofs of foundational results in the theory of cellular automata. \\

\textbf{keywords}: Cellular automata; Generalized cellular automata; Category theory; Categorical product; Curtis-Hedlund-Lyndon theorem.
\end{abstract}

\section{Introduction}

Category theory is a powerful foundation of mathematics that provides a unifying framework to describe and analyze different structures and their relationships. Many constructions and concepts that appear in different contexts throughout mathematics, such as quotient spaces, direct products, adjoints, and duality, are unified with a precise definition in category theory. In particular, a \emph{categorical product}, or simply a \emph{product}, in a category is a generalization of the Cartesian product of sets, the direct product of groups, and the product topology of topological spaces. This captures the idea of combining a family of objects to produce a new object, while preserving in some way its relationship with the original objects.  

Cellular automata have traditionally been studied as transformations defined over a discrete group $G$ (the universe) with states drawn from a finite set $A$ (the alphabet). While the classical theory, often formalized using symbolic dynamics, group theory and topology, is well-developed (e.g., see \cite{Cec2010}) there has been a growing interest in understanding these systems through more abstract mathematical structures. This paper proposes a generalized framework for cellular automata using the language of category theory, moving beyond the constraints of set-theoretic definitions. 

The main goal of this paper is to generalize the traditional definition of cellular automata by replacing the alphabet set with an object $A$ in an arbitrary category $\C$ with products. In this framework, we introduce the notion of $\C$-cellular automata as a morphism $\tau : A^G \to B^G$, where $G$ is a group, defined via a local defining morphism. This definition allows us for a systematic study of $\C$-cellular automata when $\C$ is a concrete category such as $\Set$, $\Grp$, $\Top$, $\Vect_\mathbb{K}$ and $\Poset$, but also when $\C$ is an abstract category such as $\Poset(P)$, where $P$ is a complete lattice, and $\Rel$. We prove that $\C$-cellular automata form a subcategory of $\C$, denoted as $\CA_{\C}(G)$ that satisfy properties analogous to its classical counterpart, the category of $\Set$-cellular automata. A central result of this paper is the proof of a categorical version of the Curtis-Hedlund-Lyndon theorem, which characterizes $\C$-cellular automata as morphisms that are both $G$-equivariant and uniform; this shows that the essence of this well-known theorem is categorical, not topological as usually assumed. We also prove that the category $\CA_{\C}(G)$ itself is closed under finite products, which is a generalization of the work presented in \cite{CA-prod}.   

Furthermore, we extend our investigation to define generalized $\C$-cellular automata as morphisms $\tau : A^G \to B^H$ constructed via a group homomorphism $\phi : H \to G$, as introduced in \cite{Vaz2022} for $\C = \Set$. We show that these also form a subcategory $\GCA_\C$ of $\C$, and that any pair of its objects $A^G$ and $A^H$ has a weak product $(A \times B)^{G \ast H}$, where $A \times B$ is the product in $\C$ and $G \ast H$ is the free product of groups. 

While various works have linked cellular automata to category theory \cite{Siltar,Cec2013,Ville2014,Kan1,Kan2}, they have restricted their scope to \emph{concrete categories}, which are essentially structured sets with structure-preserving functions. In contrast, we do not assume that our categories are concrete; consequently, we employ purely categorical methods (such as universal properties and commutative diagrams) to formulate all our definitions and proofs. 

The structure of this paper is as follows. In Section 2 we review some basic notions of category theory, including the product and coproduct, as well as some examples. In Section 3, we define the category $\CA_{\C}(G)$ of $\C$-cellular automata over a group $G$, we prove a categorical version of the Curtis-Hedlund-Lyndon theorem, and the we show that $\CA_{\C}(G)$ is closed under finite products. Finally, in Section 4 we introduce the category $\GCA_\C$ of generalized $\C$-cellular automata and show that it has a finite weak product. 

This work is an extended version of \cite{CA-prod} that significantly expands its theoretical scope, as all the definitions and proofs in \cite{CA-prod} were done for the particular case $\C = \Set$. Moreover, the proof of the categorical Curtis-Hedlund-Lyndon theorem is completely new.


\section{Categorical concepts}

In this section we shall review some basic notions of category theory; for a broader treatment see the standard textbooks \cite{Handbook,RomanCat,Sau1970,IntroCat}.

A \emph{category} $\C$ consists of a collection of \emph{objects}, which we denote by $\Obj(\C)$, and a collection of \emph{morphisms}, or \emph{arrows}, between objects. Each morphism $f$ has a domain $A \in \Obj(\C)$ and a codomain $B \in \Obj(\C)$; all the information that a morphism carries is condensed in the notation $f : A \to B$. Part of the essence of category theory is that objects are treated as if they are not necessarily sets, and morphisms are treated as if they are not necessarily functions. 

We denote the collection of morphisms with domain $A$ and codomain $B$ by $\C(A,B)$. Moreover, a category has a rule for composing morphisms: this means that to each pair of morphisms $f : A \to B$ and $g : B \to C$ there is a morphism $g\circ f: A \to C$, called the \emph{composition} of $g$ with $f$. The composition of morphisms satisfies the following two axioms: 
\begin{enumerate}
\item For any three morphisms $f:A\to B$, $g:B\to C$ and $h:C\to D$, we have:
\[ h\circ(g\circ f) = (h\circ g)\circ f. \]

\item For every object $A\in\Obj(\C)$, there exists a morphism $\id_A\in\C( A,A)$ such that for any $B\in\Obj(\C)$ and morphisms $g\in\C(A,B)$ and $h\in\C(B,A)$ we have:
\[ g\circ \id_A=g\text{ and }\id_A\circ h=h. \]
\end{enumerate}

\begin{example}\label{Ex1}
We introduce some examples of categories.
\begin{enumerate}
\item The class of all sets together with the class of all functions between sets form a category denoted by $\Set$.

\item The class of all groups together with the class of all group homomorphisms form a category denoted by  $\Grp$.

\item The class of all topological spaces together with the class of all continuous functions form a category denoted by $\Top$.

\item Given a field $\mathbb{K}$, the class of vector spaces over $\mathbb{K}$ together with the class of linear transformations form a category denoted by $\Vect_\mathbb{K}$.

\item The class of all partially ordered sets (posets) together with the class of all order-preserving functions between posets form a category denoted by $\Poset$.

\item Let $(P,\leq)$ be a poset. We can view $P$ as a category, where $\Obj(P) =P$, and given $a,b\in P$, there is a unique morphism from $a$ to $b$ if and only if $a\leq b$. By the properties of the order relation it is clear that all the category axioms are satisfied. We denote this category by $\Poset(P)$. 

\item The class of all sets together with all the class of all binary relations between sets form a category denoted by $\Rel$. 
\end{enumerate}
\end{example}

In the first five examples, $\Set$, $\Grp$, $\Top$, $\Vect_\mathbb{K}$, and $\Poset$, the objects of the categories are sets with additional structure, and the morphisms are functions that preserve these structures. These categories are called \emph{concrete categories}. On the other hand, the last two examples, $\Poset(P)$ and $\Rel$ are not concrete categories, they are also called \emph{abstract categories}, because morphisms are not functions between sets.

Saying that a diagram of morphisms like the following
    \[\begin{tikzcd}
        A \arrow[d, "f"'] \arrow[r, "g"] & B \arrow[ld, "h"] \\
        C                                &                  
        \end{tikzcd}\]
commutes is equivalent to say that $h \circ g = f$. In general, saying that a diagram of morphisms commutes is equivalent to saying that the composition of any two paths with the same domains and codomains are equal. Thus, a commutative diagram expresses various equations of composition of morphisms. 

An \emph{isomorphism} in $\C$ is a morphism $f : A \to B$ in $\C$ such that there exists a morphism $g : B \to C$ in $\C$, called the \emph{inverse} of $f$, satisfying 
\[ g\circ f=\id_A \text{ and } f\circ g=\id_B. \]
In this case, we say that $A$ and $B$ are isomorphic objects on $\C$ and we write $A\cong B$. It is straightforward to prove that the inverse of an isomorphism is unique and an isomorphism, and that the composition of isomorphisms is an isomorphism. If the collection of endomorphisms $\End(A):= \C(A,A)$ is a set (not a proper class) for some $A \in \C$, then $\End(A)$ is a monoid with the composition of morphisms, while the set $\Aut(A):=\{f\in\End(A) \ |\ f \text{ is an isomorphism}\}$ is its group of units. 

A \emph{covariant functor} $F$ from a category $\C$ to a category $\mathscr{D}$ assigns to each object $A \in \C$ an object $F(A) \in \mathscr{D}$, and to each morphism $f : A \to B$ in $\C$ a morphism $F(f) : F(A) \to F(B)$ in $\mathscr{D}$ such that 
\[ F(\id_A) = \id_{F(A)} \quad \text{and} \quad F(g \circ f) = F(g) \circ F(f),  \] 
for every morphism $g : B \to C$ in $\C$. The definition of a \emph{contravariant functor} $G$ from $\C$ to $\mathscr{D}$ is analogous, except that it assigns to each morphism $f : A \to B$ in $\C$ a morphism $G(f) : G(B) \to G(A)$ in $\mathscr{D}$ such that 
\[ G(\id_A) = \id_{G(A)} \quad \text{and} \quad G(g \circ f) = G(f) \circ G(g),  \]  
for every morphism $g : B \to C$ in $\C$.

\begin{definition}[product]\label{def-prod}
Let $\C$ be a category, $I$ a set of indices, and $\{A_{i}\}_{i\in I}$ a family of objects of $\C$. The \emph{product} of the family $\{A_{i}\}_{i\in I}$ is an object $P\in \C$ together with a family of morphisms $\{ \pi _{i}:P\to A_{i}\} _{i\in I}$ satisfying the following universal property: for any object $Q\in \C$ and any family of morphisms $\{\rho _{i}:Q\to A_{i}\}_{i\in I}$ there exists a unique morphism $f:Q\to P$ such that 
\begin{equation*} 
\rho _{i}=\pi _{i}\circ f\text{ for all }i\in I. 
\end{equation*} 
Each morphism $\pi _{i} : P \to A_i$ is called the $i$th \emph{projection}.
  \end{definition}

It turns out that if a product of a family exists, then it is unique up to isomorphism. Hence, we write $\prod\limits_{i\in I}A_{i}$ to denote the product of the family $\{ A_{i}\} _{i\in I}$. When we have a finite family of objects $\{ A_{i}\}_{i=1}^{n}$, we can also denote their product as $A_{1}\times A_{2}\times \dots \times A_{n}$.

In particular, the product of two objects $A$ and $B$ in $\C$ is an object $A\times B\in\C$ together with morphisms $\pi_A : A\times B\to A$ and $\pi_B : A\times B\to B$ satisfying the following universal property: for any object $X \in \C$ and any pair of morphisms $f:X\to A$ and $g:X\to B$ in $\C$, there exists a unique morphism $h:X\to A\times B$ such that the following diagram commutes:
\[ \begin{tikzcd}
& X \arrow[ld, "f"'] \arrow[rd, "g"] \arrow[d, "h"] & \\
A & A\times B \arrow[l, "\pi_A"] \arrow[r, "\pi_B"'] & B
\end{tikzcd} \]

We say that the category $\C$ itself has \emph{products} if any family of objects of $\C$ has a product, and we say that $\C$ has \emph{finite products} if any finite family of objects in $\C$ has a product (or, equivalently, if any pair of objects in $\C$ has a product). A \emph{weak product} of two objects $A\times B\in\C$ is defined analogously as the product in Definition \ref{def-prod}, except that the morphism $h:X\to A\times B$ is not required to be unique (see \cite[Sec. X.2]{Sau1970}).

\begin{example}\label{Examples of products}\text{\\}
We introduce a few examples of products in categories.
\begin{enumerate}
\item In the concrete categories described above, the product of a family of sets $\{A_i\}_{i\in I}$ is the Cartesian product of sets $\prod_{i \in I} A_i$ equipped with the corresponding structure so that the projections $\{\pi_j : \prod_{i \in I} A_i \to A_j \}_{j \in I}$ are morphisms in the corresponding categories (e.g., $\Set$, $\Grp$, $\Top$,  $\Vect_{\mathbb{K}}$, $\Poset$). However, there exist concrete categories where the product is not the Cartesian product together with the projections described above. When it does coincide, we say that the concrete category has \emph{concrete products} (see \cite[Ex. 10.55]{concrete}).

\item In the category $\Poset(P)$, where $P$ is a fixed poset, the product of two objects $a,b \in P$ is their least upper bound.

\item In the category $\Rel$, the product of a family of sets $\{A_i\}_{i\in I}$ is the disjoint union $\sqcup_{i \in I} A_i$ together with the dual of the canonical embeddings. 
\end{enumerate}
\end{example}

The following result is well-known, but we add its proof here for completeness, as it is often used in the rest of this paper. 

\begin{lemma}\label{Lemma, Universal Product Property}
Let $\C$ be a category, let $\{ A_{i}\} _{i\in I}$ a family of objects with product $\prod\limits_{i\in I}A_{i}$, and consider two morphisms 
\[ X\overset{f}{\underset{g}{\rightrightarrows}}\prod\limits_{i\in I}A_{i}. \]
Then, $f=g$ if and only if $\pi_i\circ f=\pi_i\circ g$ for all $i\in I$.
\end{lemma}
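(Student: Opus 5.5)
The forward direction is immediate: if $f=g$, then composing both sides on the left with any projection $\pi_i$ gives $\pi_i\circ f=\pi_i\circ g$ for every $i\in I$. No property of the product is needed here, only that composition is well defined.

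For the converse, the plan is to use the uniqueness clause in the universal property of Definition \ref{def-prod}. Suppose $\pi_i\circ f=\pi_i\circ g$ for all $i\in I$. Define the family $\{\rho_i:=\pi_i\circ f : X\to A_i\}_{i\in I}$. By the universal property of $\prod_{i\in I}A_i$ applied with $Q=X$, there is a \emph{unique} morphism $h:X\to\prod_{i\in I}A_i$ such that $\pi_i\circ h=\rho_i$ for all $i\in I$. Trivially $f$ satisfies $\pi_i\circ f=\rho_i$ for all $i$. By hypothesis, $\pi_i\circ g=\pi_i\circ f=\rho_i$ for all $i$, so $g$ satisfies the same equations. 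Uniqueness then forces $f=h=g$.

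There is essentially no obstacle. The only point needing care is to state explicitly that the argument relies on the uniqueness part of the universal property. In particular, the converse can fail for a weak product in the sense defined above, so the lemma genuinely requires $\prod_{i\in I}A_i$ to be a product rather than a weak product.
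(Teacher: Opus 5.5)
Your proof is correct and follows the paper's argument exactly: the forward direction is trivial, and for the converse you apply the universal property to the family $\rho_i=\pi_i\circ f=\pi_i\circ g$ and use uniqueness of the mediating morphism to conclude $f=h=g$. Your remark that the converse needs a genuine product rather than a weak product is also accurate.
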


\begin{proof}
The first implication is trivial, so let us assume that $\pi_i\circ f=\pi_i\circ g$ for all $i\in I$. By the universal property of the product applied to the family of morphisms $\{ \rho_i:=\pi_i\circ f=\pi_i\circ g \}_{i \in I}$, there exists a unique morphism $h:X\to \prod\limits_{i\in I}A_{i}$ such that $\rho _{i}=\pi _{i}\circ h$ for all $i \in I$. Therefore, by the construction of $\rho_i$ and the uniqueness of $h$, we have that $f=h=g$.
\end{proof}

The dual notion of a product in a category, called a \emph{coproduct}, is defined by reversing the direction of the morphisms in the definition of a product. Despite this simple change in the definition, products and coproducts may behave dramatically different in a category. We will only need the definition of coproduct of two objects.

\begin{definition}[coproduct]\label{def-cop}
Let $\C$ be a category, and $A,B\in\C$ be objects of the category. A \emph{coproduct} of objects $A$ and $B$ in $\C$ is an object $A+ B\in\C$ together with morphisms $\iota_A:A\to A+ B$ and $\iota_B:B\to A+ B$ satisfying the following universal property: for any object $X$ and morphisms $f:A\to X$ and $g:B\to X$ there exists a unique morphism $h:X\to A+ B$ such that $h\circ\iota_A=f$ and $h\circ\iota_B=g$; this is equivalent of saying that the following diagram commutes:
\[ \begin{tikzcd}
& X & \\
A \arrow[ru, "f"] \arrow[r, "\iota_A"'] & A+ B \arrow[u, "h"] & B \arrow[l, "\iota_B"] \arrow[lu, "g"']
\end{tikzcd}\]
\end{definition}

\begin{example}\label{Ex-cop}
We introduce some examples of coproducts in categories. 
\begin{enumerate}
\item In the category of sets $\Set$, the coproduct of $A$ and $B$ corresponds to the disjoint union $A \sqcup B$ together with the canonical inclusions $\iota_A : A\to A \sqcup B$ and $\iota_B : B\to A \sqcup B$.

\item In the category of groups $\Grp$, the coproduct of $G$ and $H$ corresponds to the \emph{free product} of groups $G\ast H$ together with the canonical embeddings $\iota_G :G\to G\ast H$ and $\iota_H : H\to G\ast H$ (see \cite[Ex. 88]{RomanCat} or \cite[Sec. III.3]{Sau1970}). Explicitly, if $G$ and $H$ are given by the following presentations
\[ G= \langle S_G \; | \;  R_G \rangle \quad \text{ and } \quad H = \langle S_H \; | \; R_H \rangle, \]  
then the free product $G \ast H$ has presentation 
\[   G\ast H= \langle S_G + S_H \; | \; R_G + R_H \rangle, \]
where $+$ denotes the coproduct (disjoint union) in the category of sets (see \cite[Theorem 11.53]{Rotman}). It is then clear that the canonical embeddings $\iota_G:G\to G\ast H$ and $\iota_H:H\to G\ast H$ are defined by mapping the generators $S_G$ and $S_H$ to $S_G + S_H$, respectively. 

\item In the category $\Poset(P)$, where $P$ is a fixed poset, the product of two objects $a,b \in P$ is their greatest lower bound.
\end{enumerate}
\end{example}


\section{Cellular automata in categories with products}

\subsection{Configuration objects}

For the rest of the paper, let $\C$ be a category with products. Examples of these the concrete categories $\Set$, $\Grp$, $\Top$, $\Vect_{\mathbb{K}}$, $\Poset$, and the abstract categories $\Poset(P)$, when $P$ is a \emph{complete lattice} (i.e., any arbitrary family of elements in $P$ has a least upper bound), and $\Rel$. 

 The first step in order to provide a categorical framework for cellular automata is to examine powers of an object $A \in \C$ with itself. Given a set $I$ and an object $A \in \C$, we denote the product of $I$ copies of $A$ as
\[    A^{I}:=\prod\limits_{i\in I}A.  \]
  For any subset $S \subseteq I$, the universal property of the product $A^S$ implies that there is a unique morphism 
\[   \Res_S^I : A^{I}\to A^{S} \]
such that
  \begin{equation} \label{eq-res}
  \pi _{s}^I=\pi _{s}^{S}\circ\Res_S^I \ \  \forall s\in S,
   \end{equation}
       where $\pi _s^{S}$ is a projection of $A^{S}$ and $\pi _s^I$ is a projection of $A^{I}$. We call $\Res_S^I : A^{I}\to A^{S}$ the \emph{restriction morphism}. 
  
    \begin{lemma}\label{Lemma, transitivity of res}
   For any object $A\in\C$ and sets $S\subseteq T\subseteq I$, then 
         \[ \Res_S^I = \Res_S^T \circ\Res_T^I .\]
    \end{lemma}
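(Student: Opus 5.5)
We will use Lemma \ref{Lemma, Universal Product Property}: two morphisms into a product agree as soon as they agree after composing with every projection. Both $\Res_S^I$ and $\Res_S^T\circ\Res_T^I$ are morphisms $A^I\to A^S$, and $A^S$ is the product of the family $\{A\}_{s\in S}$. So it suffices to show that
\[ \pi_s^S\circ\Res_S^I=\pi_s^S\circ\Res_S^T\circ\Res_T^I \quad \text{for all } s\in S. \]

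Fix $s\in S$. By the defining equation \eqref{eq-res} for $\Res_S^I$, the left-hand side equals $\pi_s^I$. For the right-hand side we compute in two steps, using associativity of composition throughout. First, \eqref{eq-res} applied to $\Res_S^T$ (with $T$ in place of $I$) gives $\pi_s^S\circ\Res_S^T=\pi_s^T$. Second, since $s\in S\subseteq T$, the element $s$ lies in $T$, so \eqref{eq-res} applied to $\Res_T^I$ gives $\pi_s^T\circ\Res_T^I=\pi_s^I$. Hence the right-hand side is also $\pi_s^I$, and the lemma follows from Lemma \ref{Lemma, Universal Product Property}.

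The argument is routine. The only point needing care is the second step: we need $s\in T$ to invoke the defining property of $\Res_T^I$, and this is exactly where the hypothesis $S\subseteq T$ is used.
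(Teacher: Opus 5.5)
Your proof is correct and matches the paper's argument. Both show that $\pi_s^S\circ\Res_S^T\circ\Res_T^I=\pi_s^I$ for every $s\in S$ (using $S\subseteq T$) and conclude by uniqueness; the paper invokes the uniqueness clause defining $\Res_S^I$ directly, whereas you go through Lemma \ref{Lemma, Universal Product Property}, which amounts to the same thing.
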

    \begin{proof}
      By the universal property of the product, the following diagram commutes for every $s \in S$:
           \[\begin{tikzcd}[ampersand replacement=\&]
        {A^I} \& {A^T} \& {A^S} \\
        \& A
        \arrow["{\Res_S^T}", from=1-2, to=1-3]
        \arrow["{\Res_T^I}", from=1-1, to=1-2]
        \arrow["{\pi_s^I}"', from=1-1, to=2-2]
        \arrow["{\pi_s^T}"', from=1-2, to=2-2]
        \arrow["{\pi_s^S}", from=1-3, to=2-2]
        \end{tikzcd}\]
Since $\Res_S^I : A^I\to A^S$ is the unique morphism such that $\pi_s^I=\pi_s^S \circ \Res_S^I$ for all $s \in S$, we must have that  $\Res_S^I = \Res_S^T \circ\Res_T^I$.
\end{proof}

    Consider a function $f:I\to J$ between sets $I$ and $J$, and fix an object $A\in\C$. For each $i\in I$, we have the morphism $\pi_{f(i)}^J : A^J\to A$, and, by the universal property of the product, there exists a unique morphism
\[      f^*_A : A^J \to A^I \]
such that:
   \begin{equation}\label{eq-star}
    \pi^J_{f(i)}=\pi_i^I \circ f^*_A, \quad \forall i\in I. 
     \end{equation}
We call the morphism $f_A^* : A^J \to A^I$ the \emph{pullback}\footnote{The use of this term is inspired by the pullback in differential geometry, not the pullback in category theory.} of the function $f : I \to J$.

    \begin{proposition}\label{le-comp-pullback}
      For any functions $I\xrightarrow{f}J\xrightarrow{g}K\in\Set$ and any object $A\in\C$, we have
        $$(g\circ f)^*_A=f^*_A\circ g^*_A$$
    \end{proposition}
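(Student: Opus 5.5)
Both sides of the claimed equality are morphisms $A^K \to A^I$ into a product. By Lemma \ref{Lemma, Universal Product Property}, it therefore suffices to show that they agree after composing with every projection $\pi_i^I$, $i \in I$. Equivalently, we can appeal directly to the uniqueness clause in the definition of $(g\circ f)^*_A$. By \eqref{eq-star} applied to the function $g\circ f : I \to K$, the morphism $(g\circ f)^*_A$ is the unique morphism $A^K\to A^I$ satisfying
\[ \pi^K_{g(f(i))} = \pi_i^I \circ (g\circ f)^*_A \quad \forall i \in I. \]
So the plan is to check that $f^*_A\circ g^*_A$ satisfies the same family of equations.

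Fix $i\in I$. First, apply \eqref{eq-star} for $f : I \to J$ to rewrite $\pi_i^I\circ f^*_A$ as $\pi^J_{f(i)}$. Using associativity of composition, this gives
\[ \pi_i^I\circ (f^*_A\circ g^*_A) = (\pi_i^I\circ f^*_A)\circ g^*_A = \pi^J_{f(i)}\circ g^*_A. \]
Next, apply \eqref{eq-star} for $g : J\to K$ at the index $j = f(i)\in J$. This yields $\pi^J_{f(i)}\circ g^*_A = \pi^K_{g(f(i))}$. Combining the two steps,
\[ \pi_i^I\circ (f^*_A\circ g^*_A) = \pi^K_{(g\circ f)(i)} = \pi_i^I\circ (g\circ f)^*_A \]
for every $i\in I$. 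Lemma \ref{Lemma, Universal Product Property} (or uniqueness in the universal property of $A^I$) then gives $(g\circ f)^*_A = f^*_A\circ g^*_A$.

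There is no real obstacle here; the argument is a routine diagram chase. The only point needing care is that \eqref{eq-star} for $g$ is stated for all indices of its domain $J$, so it may legitimately be instantiated at $j=f(i)$. One can also present the argument as a single commutative diagram $A^K\to A^J\to A^I\to A$, with the triangles given by \eqref{eq-star}, mirroring the proof of Lemma \ref{Lemma, transitivity of res}.
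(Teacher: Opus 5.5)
Your proof is correct and is the paper's own argument: you apply \eqref{eq-star} for $f$ and then for $g$ at the index $j=f(i)$ to get $\pi^I_i\circ(f^*_A\circ g^*_A)=\pi^K_{g(f(i))}=\pi^I_i\circ(g\circ f)^*_A$ for every $i\in I$, and conclude by uniqueness in the universal property of $A^I$. The only difference is presentation: the paper writes the two applications of \eqref{eq-star} as the commutative diagram $A^K\to A^J\to A^I\to A$ that you mention at the end.
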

    \begin{proof}
 The definition of the pullbacks $f^*_A$ and $g^*_A$ imply that the following diagram commutes: 
      \[\begin{tikzcd}
            {A^K} & {A^J} & {A^I} \\
            & A
            \arrow["{g^*_A}", from=1-1, to=1-2]
            \arrow["{\pi^K_{g(f(i))}}"', from=1-1, to=2-2]
            \arrow["{f^*_A}", from=1-2, to=1-3]
            \arrow["{\pi^J_{f(i)}}"', from=1-2, to=2-2]
            \arrow["{\pi^I_i}", from=1-3, to=2-2]
        \end{tikzcd}\]
        for all $i\in I$. By the definition of $(g\circ f)^*_A$ we have 
        \[ \pi^I_i\circ (g\circ f)^*_A=\pi^K_{g(f(i))}=\pi_i^I \circ (f^*_A\circ g^*_A), \quad \forall i \in I,  \] 
        so the result follows by the uniqueness of the pullback. 
    \end{proof}

   It is clear that for any $I \in \Set$ and $A \in \C$, $(id_{I})^*_A=id_{A^I}$; therefore, the previous proposition implies that the pullback $(\_)_A^*$ defines a contravariant functor from $\Set$ to $\C$. 
   
   It is also clear that if $\alpha :I\to I$ is a permutation of the set $I$, i.e. $\alpha \in \Sym(I)$, then $\alpha^*_A\in \Aut(A^{I})$, for any object $A \in \C$. To simplify the notation, we will generally write $\alpha^*$ instead of $\alpha^*_A$ when the object $A \in \C$ is clear from the context. 

  We shall focus now on the case when the set of indices for the product of $A \in \C$ with itself is given by a group $G \in \Grp$. Recall that for each $g\in G$, the right translation $R_{g}:G\rightarrow G$, given by $R_g(x):= xg$, $\forall x \in G$, is a permutation of $G$. Hence, the pullback $R_g^* : A^G \to A^G$ is an automorphism in $\C$ that satisfies 
    \[ R_g^* \circ R_h^* = (R_{h} \circ R_{g})^* = R_{gh}^*, \quad \forall g,h \in G.  \]
Define the \emph{shift action} of $G$ on $A^G$ as the assignment $\varphi :G\rightarrow \Aut(A^{G})$ given by 
\[     \varphi (g) = R_{g}^* \quad \forall g\in G. \]
When $\C=\Set$, this construction coincides with the usual shift action of $G$ on $A^G$ (see \cite[Sec 1.1]{Cec2010}).

For any $g \in G$, we write the image of the shift action of $G$ on $A^G$ by $\varphi_g := \varphi (g) \in \Aut(A^{G}) $. 

\begin{lemma}\label{le-shift-action}
For each $g \in G$, the automorphism $\varphi _{g}:A^{G}\rightarrow A^{G}$ is the only morphism of $\C$ such that
\[   \pi^G _{hg} = \pi^G_{h}\circ \varphi _{g}, \quad \forall g,h\in G, \] 
where $\pi^G_h : A^G \to A$ is the projection morphism. Moreover,
 \[ \varphi_{gh} = \varphi_g \circ \varphi_h, \quad \forall g,h \in G.  \] 
    \end{lemma}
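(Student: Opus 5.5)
The plan is to unfold the definition $\varphi_g = R_g^*$ and read off both claims from the defining property \eqref{eq-star} of the pullback together with Proposition \ref{le-comp-pullback}.

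\emph{Characterization.} Take $f = R_g : G \to G$ and $I=J=G$ in \eqref{eq-star}. This says that $R_g^*$ is the unique morphism $A^G \to A^G$ satisfying $\pi^G_{R_g(h)} = \pi^G_h \circ R_g^*$ for all $h \in G$. Since $R_g(h) = hg$, this is precisely $\pi^G_{hg} = \pi^G_h \circ \varphi_g$ for all $h\in G$. Uniqueness comes directly from the universal property of the product $A^G$ applied to the family $\{\pi^G_{hg}\}_{h\in G}$. Equivalently, by Lemma \ref{Lemma, Universal Product Property}, any morphism $\psi$ with $\pi^G_h\circ\psi = \pi^G_{hg}$ for all $h$ agrees with $\varphi_g$ after every projection, so $\psi = \varphi_g$. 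This is the only place where uniqueness enters.

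\emph{Homomorphism property.} I would compute $R_h \circ R_g$ on $x \in G$: it gives $R_h(xg) = xgh = R_{gh}(x)$, so $R_{gh} = R_h \circ R_g$ as functions $G\to G$. Proposition \ref{le-comp-pullback} (contravariance) then yields
\[ \varphi_{gh} = R_{gh}^* = (R_h\circ R_g)^* = R_g^*\circ R_h^* = \varphi_g\circ\varphi_h . \]
As a sanity check, one can instead verify this with the characterization: $\pi_x\circ\varphi_g\circ\varphi_h = \pi_{xg}\circ\varphi_h = \pi_{xgh}$, and then apply uniqueness.

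\emph{Main obstacle.} There is no real obstacle. The one point needing care is the order of composition under contravariance: $R_{gh}$ equals $R_h\circ R_g$, not $R_g\circ R_h$. It is exactly this reversal that makes $\varphi$ a left action, i.e.\ a homomorphism $G\to\Aut(A^G)$ rather than an anti-homomorphism.
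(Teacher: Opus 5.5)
Your proposal is correct and follows essentially the same route as the paper, which proves the lemma by appeal to the defining property \eqref{eq-star} of the pullback $R_g^*$. Your derivation of $\varphi_{gh}=\varphi_g\circ\varphi_h$ from $R_{gh}=R_h\circ R_g$ and the contravariance in Proposition \ref{le-comp-pullback} matches the computation $R_g^*\circ R_h^*=(R_h\circ R_g)^*=R_{gh}^*$ that the paper records just before defining the shift action.
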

    \begin{proof}
    This follows by the definition (\ref{eq-star}) of the pullback $R_{g}^*$. 
    \end{proof}

   We say that a product $A^{G}\in \C$ together with the shift action of $G$ on $A^G$ is a \emph{configuration object} with alphabet $A\in \C$ and universe $G\in \Grp$. 

   We finish this section with a technical result that will be useful later. For any subset $S$ of $G$ and for any $g\in G$, consider the translation of $S$ given by $Sg := \{sg : s \in S \}$. By the universal property of the product, there exists a unique morphism $\overset{\to}{g}:A^{Sg}\to A^S$ such that the following diagram commutes for every $s \in S$: 
  \[\begin{tikzcd}[ampersand replacement=\&]
        A^{Sg} \&\& A^S \\
        \& A
        \arrow["{\pi^{Sg}_{sg}}"', from=1-1, to=2-2]
        \arrow["{\pi^S_s}", from=1-3, to=2-2]
        \arrow["\overset{\to}{g}", from=1-1, to=1-3]
    \end{tikzcd}\]

    \begin{lemma}\label{Le-trans-res}
    Let $A^G\in\C$ be a configuration object, let $S\subseteq G$, and let $g\in G$. Then
    \[ \overset{\to}{g}\circ\Res^G_{Sg}=\Res^G_S\circ\varphi_{g}. \]
    \end{lemma}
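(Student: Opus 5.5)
Both sides of the identity are morphisms $A^G \to A^S$, and $A^S$ is a product. So, by Lemma \ref{Lemma, Universal Product Property}, it suffices to show that
\[ \pi^S_s\circ \overset{\to}{g}\circ\Res^G_{Sg} = \pi^S_s\circ \Res^G_S\circ\varphi_g \quad \text{for every } s\in S. \]
The plan is to reduce each side to the single projection $\pi^G_{sg}$, using only the defining equations of the morphisms involved.

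For the left-hand side, I first apply the defining property of $\overset{\to}{g}$, namely $\pi^S_s\circ\overset{\to}{g}=\pi^{Sg}_{sg}$. This gives $\pi^{Sg}_{sg}\circ\Res^G_{Sg}$. Since $sg\in Sg$, equation (\ref{eq-res}) for the restriction $\Res^G_{Sg}$ turns this into $\pi^G_{sg}$.

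For the right-hand side, equation (\ref{eq-res}) for $\Res^G_S$ gives $\pi^S_s\circ\Res^G_S=\pi^G_s$, so the expression becomes $\pi^G_s\circ\varphi_g$. By Lemma \ref{le-shift-action} with $h=s$, this equals $\pi^G_{sg}$.

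Both composites therefore equal $\pi^G_{sg}$ for every $s\in S$, and Lemma \ref{Lemma, Universal Product Property} yields the claimed equality. I do not expect a genuine obstacle. The only points needing care are bookkeeping ones: checking that the projections of $A^{Sg}$ are indexed by the elements $sg$, so that $\pi^{Sg}_{sg}$ is a legitimate projection to which (\ref{eq-res}) applies, and matching the right-translation convention in Lemma \ref{le-shift-action}, where $\pi^G_{hg}=\pi^G_h\circ\varphi_g$.
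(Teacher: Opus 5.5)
Your proposal is correct and follows essentially the same route as the paper. The paper likewise shows, for each $s\in S$, that both composites equal $\pi^G_{sg}$ via the definitions of $\overset{\to}{g}$ and $\Res$ and Lemma \ref{le-shift-action}, and then concludes by Lemma \ref{Lemma, Universal Product Property}.
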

    \begin{proof}
By Lemma \ref{le-shift-action} and the definition (\ref{eq-res}) of the restriction morphism, we have that for all $s \in S$,
      \[ \pi^{Sg}_{sg} \circ \Res^G_{Sg}= \pi^G_{sg} = \pi^G_s \circ \varphi_{g}=\pi^S_s\circ\Res^G_S\circ\varphi_{g}. \]
        By its definition, $\pi^{Sg}_{sg} = \pi^S_s \circ \overset{\to}{g}$, so we have 
        \[  \pi^S_s\circ \overset{\to}{g}\circ\Res^G_{Sg} = \pi^S_s\circ\Res^G_S\circ \varphi_{g},  \]
        for all $s \in S$. The result follows by Lemma \ref{Lemma, Universal Product Property}.
    \end{proof}

 
 \subsection{Definition of $\C$-cellular automata}

We generalize Definition 1.4.1 in \cite{Cec2010} regarding cellular automata over a group universe $G$ by allowing the alphabet $A$ to be not only a set but an object in an arbitrary category $\C$ with products. 

    \begin{definition}[$\C$-cellular automata]\label{C-CA}
    Let $\C$ be a category with products and let $G$ be a group. Let $A^{G},B^{G}\in \C$ be two configuration objects. A morphism $\tau : A^{G} \to B^{G} \in \C$ is a \emph{$\C$-cellular automaton} over $G$ if there exist a finite subset $S\subseteq G$ and a morphism $\mu : A^{S} \to B \in \C$ such that
    \[    \pi _{g}^G \circ \tau =\mu \circ \Res^G_S \circ \varphi_{g} \quad \forall g\in G;
    \end{equation*}
equivalently, this means that the following diagram commutes for every $g \in G$:
     \[\begin{tikzcd}[ampersand replacement=\&]
        {A^G} \&\&\& {B^G} \\
        \& {A^{G}} \\
        \&\& {A^S} \\
        \&\&\& B
        \arrow["{\tau}", from=1-1, to=1-4]
        \arrow["{\pi^G_g}", from=1-4, to=4-4]
        \arrow["{\varphi_g}"', from=1-1, to=2-2]
        \arrow["{\Res^G_S}"', from=2-2, to=3-3]
        \arrow["\mu"', from=3-3, to=4-4]
    \end{tikzcd}\]
The set $S$ is called a \emph{neighborhood} of $\tau$ and the morphism $\mu : A^S \to B$ is called a \emph{local defining morphism} of $\tau$.
    \end{definition}

It is clear that a $\Set$-cellular automaton over $G$ coincides with a cellular automaton over $G$ as in \cite[Ch. 1]{Cec2010}. We shall prove that $\C$-cellular automata satisfy many analogous to $\Set$-cellular automata.

\begin{example}\label{ex-id}
For any configuration object $A^G \in \C$ and $g\in G$, Lemma \ref{le-shift-action} and the definition of restriction morphism (\ref{eq-res}) imply that
    \[ \pi^G_g\circ\id_{A^G}=\pi^G_g= \pi^G_e \circ \varphi_{g} = \pi^{\{e\}}_e \circ \Res^G_{\{e\}}\circ\varphi_{g}, \]
   where $e \in G$ is the group identity. This shows that $\id_{A^G} : A^G \to A^G$ is a $\C$-cellular automaton with local defining morphism $\pi^{\{e\}}_e : A^{\{e\}} \to A$. 
    \end{example}

    \begin{proposition}\label{Prop, C-equivariant automata}
    Every $\C$-cellular automaton $\tau :A^{G}\to B^{G}$ is \emph{$G$-equivariant} in the sense that
    \[    \tau\circ\varphi_g=\varphi_g\circ\tau \quad \forall g\in G. \]
    \end{proposition}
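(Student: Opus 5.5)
By Lemma \ref{Lemma, Universal Product Property}, applied to the product $B^G$, it suffices to show that
\[ \pi^G_h\circ\tau\circ\varphi_g=\pi^G_h\circ\varphi_g\circ\tau \quad \text{for all } h\in G. \]
The plan is to rewrite both sides with the defining equation of $\tau$ from Definition \ref{C-CA} and with Lemma \ref{le-shift-action}, which describe the relevant projections. Fix $g,h\in G$, and let $S$ and $\mu:A^S\to B$ be a neighborhood and a local defining morphism of $\tau$.

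For the right-hand side, Lemma \ref{le-shift-action} applied to $B^G$ gives $\pi^G_h\circ\varphi_g=\pi^G_{hg}$. The defining property of $\tau$ then gives
\[ \pi^G_h\circ\varphi_g\circ\tau=\pi^G_{hg}\circ\tau=\mu\circ\Res^G_S\circ\varphi_{hg}. \]

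For the left-hand side, the defining property of $\tau$ at the index $h$ gives $\pi^G_h\circ\tau=\mu\circ\Res^G_S\circ\varphi_h$. Composing on the right with $\varphi_g$ and using the homomorphism property $\varphi_h\circ\varphi_g=\varphi_{hg}$ from Lemma \ref{le-shift-action} yields
\[ \pi^G_h\circ\tau\circ\varphi_g=\mu\circ\Res^G_S\circ\varphi_h\circ\varphi_g=\mu\circ\Res^G_S\circ\varphi_{hg}. \]
The two sides now agree for every $h$, and the lemma on the universal product property concludes the argument.

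The only delicate step is checking that the order of composition in $\varphi_h\circ\varphi_g=\varphi_{hg}$ matches the convention fixed earlier, namely $R_g^*\circ R_h^*=R_{gh}^*$ and hence $\varphi_{gh}=\varphi_g\circ\varphi_h$. This matters because the projection identity $\pi_{hg}=\pi_h\circ\varphi_g$ pairs with right translations. If needed, I would verify it directly from the projection identity: $\pi_k\circ\varphi_h\circ\varphi_g=\pi_{kh}\circ\varphi_g=\pi_{khg}=\pi_k\circ\varphi_{hg}$ for every $k\in G$. No other obstacle is expected.
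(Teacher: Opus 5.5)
Your proposal is correct and follows essentially the same route as the paper: you compare $\pi^G_h\circ\tau\circ\varphi_g$ and $\pi^G_h\circ\varphi_g\circ\tau$ by reducing both to $\mu\circ\Res^G_S\circ\varphi_{hg}$ using Lemma~\ref{le-shift-action} and the defining equation of $\tau$, then conclude with Lemma~\ref{Lemma, Universal Product Property}. Your check that $\varphi_h\circ\varphi_g=\varphi_{hg}$ matches the paper's convention $\varphi_{gh}=\varphi_g\circ\varphi_h$ is also right.
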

    \begin{proof}
    Let $\mu :A^{S}\to B$ be a local defining morphism of $\tau$. Observe that by Lemma \ref{le-shift-action}, for every $g,k \in G$, 
    \[    \pi^G _{g} \circ \varphi_{k} \circ \tau  =\pi^G_{gk}\circ \tau =\mu \circ\Res^G_S\circ \varphi _{gk} =\mu \circ \Res^G_S\circ \varphi _{g}\circ\varphi _{k}=\pi^G_{g}\circ \tau \circ \varphi _{k}. \]
    The result follows by Lemma \ref{Lemma, Universal Product Property}.
    \end{proof}
    \begin{corollary}\label{cor-eq}
   For $A,B \in \C$, let $\tau : A^G \to B^G$ and $\sigma : A^G \to B^G$ be two $G$-equivariant morphisms in $\C$. If $\pi_e^G \circ \tau = \pi_e^G \circ \sigma$, then $\tau = \sigma$. 
    \end{corollary}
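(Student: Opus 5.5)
The plan is to reduce the equality $\tau = \sigma$ to checking all coordinates, and then to use equivariance to move each coordinate back to the identity element $e$. By Lemma \ref{Lemma, Universal Product Property}, applied to the product $B^G$ with projections $\{\pi^G_g\}_{g\in G}$, it suffices to show that
\[ \pi^G_g \circ \tau = \pi^G_g \circ \sigma \quad \forall g \in G. \]

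For each $g \in G$, Lemma \ref{le-shift-action} applied to the configuration object $B^G$, with $h = e$, gives $\pi^G_g = \pi^G_{eg} = \pi^G_e \circ \varphi_g$. Combining this with the $G$-equivariance of $\tau$ and $\sigma$, and with the hypothesis $\pi^G_e\circ\tau = \pi^G_e\circ\sigma$, we obtain the chain
\[ \pi^G_g \circ \tau = \pi^G_e \circ \varphi_g \circ \tau = \pi^G_e \circ \tau \circ \varphi_g = \pi^G_e \circ \sigma \circ \varphi_g = \pi^G_e \circ \varphi_g \circ \sigma = \pi^G_g \circ \sigma . \]
The conclusion $\tau = \sigma$ then follows from Lemma \ref{Lemma, Universal Product Property}.

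There is no real obstacle here. The only care needed is notational: the shift $\varphi_g$ on the left of $\tau$ acts on $B^G$, while the one on the right acts on $A^G$. The $G$-equivariance identity $\tau\circ\varphi_g=\varphi_g\circ\tau$ is understood with exactly these domains, so Lemma \ref{le-shift-action} must be applied to $B^G$ in the first and last steps of the chain.
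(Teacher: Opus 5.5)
Your proposal is correct and is essentially identical to the paper's proof. Both reduce to coordinatewise equality via Lemma \ref{Lemma, Universal Product Property} and use the same chain $\pi^G_g\circ\tau=\pi^G_e\circ\varphi_g\circ\tau=\pi^G_e\circ\tau\circ\varphi_g=\pi^G_e\circ\sigma\circ\varphi_g=\pi^G_e\circ\varphi_g\circ\sigma=\pi^G_g\circ\sigma$, which comes from Lemma \ref{le-shift-action} and equivariance. Your remark on which configuration object each $\varphi_g$ acts on makes explicit a point the paper leaves implicit.
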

   \begin{proof}
    By Lemma \ref{le-shift-action} and Proposition \ref{Prop, C-equivariant automata}, we have that, for every $g \in G$,
   \[   \pi^G_g \circ \tau = \pi^G_e \circ (\varphi_{g} \circ \tau) = (\pi^G_e \circ \tau) \circ \varphi_{g} = \pi^G_e \circ (\sigma \circ \varphi_{g}) = (\pi^G_e \circ \varphi_{g}) \circ \sigma=\pi^G_g \circ\sigma.  \]
        It follows by Lemma \ref{Lemma, Universal Product Property} that $\tau=\sigma$.
   \end{proof}

    \begin{theorem}\label{th-comp}
    Let $\tau:A^{G}\to B^{G}$ and $\sigma:B^{G}\to C^{G}$ be two $\C$-cellular automata over $G$ with neighbourhoods $T \subseteq G$ and $S \subseteq G$, respectively. Then, the composition $\sigma\circ\tau:A^{G}\to C^{G}$ is a $\C$-cellular automaton over $G$ with neighbourhood $TS := \{ts : t \in T, s \in S \}$. 
    \end{theorem}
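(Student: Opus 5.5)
The plan is to build an explicit local defining morphism for $\sigma\circ\tau$ on $TS$, check the defining equation at the identity $e$, and then get all other $g\in G$ from $G$-equivariance. Let $\mu : A^T \to B$ and $\nu : B^S \to C$ be local defining morphisms of $\tau$ and $\sigma$. Since $T$ and $S$ are finite, $TS$ is finite. For each $s\in S$ we have $Ts\subseteq TS$, so there is a morphism
\[ \mu\circ\overset{\to}{s}\circ\Res^{TS}_{Ts} : A^{TS}\to B, \]
where $\overset{\to}{s}:A^{Ts}\to A^T$ is the morphism defined before Lemma \ref{Le-trans-res}. By the universal property of the product $B^S$, these morphisms determine a unique $\kappa : A^{TS}\to B^S$ with $\pi^S_s\circ\kappa = \mu\circ\overset{\to}{s}\circ\Res^{TS}_{Ts}$ for all $s\in S$. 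The candidate local defining morphism is $\lambda := \nu\circ\kappa : A^{TS}\to C$.

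The key step is the identity
\[ \Res^G_S\circ\tau = \kappa\circ\Res^G_{TS}, \]
which I will verify componentwise using Lemma \ref{Lemma, Universal Product Property}. For each $s\in S$, I argue as follows.
\begin{itemize}
\item By (\ref{eq-res}), $\pi^S_s\circ\Res^G_S\circ\tau = \pi^G_s\circ\tau$.
\item The definition of $\tau$ gives $\pi^G_s\circ\tau = \mu\circ\Res^G_T\circ\varphi_s$.
\item Lemma \ref{Le-trans-res} gives $\mu\circ\Res^G_T\circ\varphi_s = \mu\circ\overset{\to}{s}\circ\Res^G_{Ts}$.
\item Lemma \ref{Lemma, transitivity of res} gives $\Res^G_{Ts} = \Res^{TS}_{Ts}\circ\Res^G_{TS}$.
\end{itemize}
Chaining these, $\pi^S_s\circ\Res^G_S\circ\tau = \mu\circ\overset{\to}{s}\circ\Res^{TS}_{Ts}\circ\Res^G_{TS} = \pi^S_s\circ\kappa\circ\Res^G_{TS}$. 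This bookkeeping, namely matching the translated neighbourhoods $Ts$ with the morphisms $\overset{\to}{s}$, is the only delicate point.

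Next I handle the identity element. Since $R_e=\id_G$, we have $\varphi_e = \id_{A^G}$, so the defining equation of $\sigma$ at $e$ reads $\pi^G_e\circ\sigma = \nu\circ\Res^G_S$. Therefore
\[ \pi^G_e\circ\sigma\circ\tau = \nu\circ\Res^G_S\circ\tau = \nu\circ\kappa\circ\Res^G_{TS} = \lambda\circ\Res^G_{TS}. \]

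Finally, I pass to an arbitrary $g\in G$. Lemma \ref{le-shift-action} gives $\pi^G_g = \pi^G_e\circ\varphi_g$. Proposition \ref{Prop, C-equivariant automata} applied to $\tau$ and $\sigma$ shows that $\sigma\circ\tau$ commutes with $\varphi_g$. Hence
\[ \pi^G_g\circ\sigma\circ\tau = \pi^G_e\circ\sigma\circ\tau\circ\varphi_g = \lambda\circ\Res^G_{TS}\circ\varphi_g, \]
which is exactly the condition of Definition \ref{C-CA} with neighbourhood $TS$ and local defining morphism $\lambda$.
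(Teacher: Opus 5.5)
Your proof is correct and follows essentially the same route as the paper. You build the same auxiliary morphism $\kappa : A^{TS}\to B^S$ (the paper's $\Phi$) from $\mu\circ\overset{\to}{s}\circ\Res^{TS}_{Ts}$, prove the same key identity $\Res^G_S\circ\tau=\kappa\circ\Res^G_{TS}$, and take $\nu\circ\kappa$ as the local defining morphism. The only cosmetic difference is that you check the identity element first and then propagate using the equivariance of $\sigma\circ\tau$, whereas the paper applies the equivariance of $\tau$ directly at each $k\in G$.
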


    \begin{proof}
    Let $\mu:A^T \to B$ and $\nu: B^S \to C$ be local defining morphisms of $\tau$ and $\sigma$, respectively. By Proposition \ref{Prop, C-equivariant automata}, for any $k\in G$, we have
    \begin{equation}\label{eq-comp}
  \pi_k^G \circ (\sigma\circ\tau) = (\pi^G_k\circ\sigma)\circ\tau= \nu\circ\Res^G_S\circ (\varphi_{k}\circ\tau) =  \nu\circ\Res^G_S\circ\tau\circ\varphi_{k}. 
\end{equation}

For $s\in S$, consider the projection $\pi^S_S:B^S\to B$. Then, by the definition of restriction (\ref{eq-res}) and Lemma \ref{Le-trans-res}, 
   \[  \pi^S_s \circ (\Res^G_S \circ \tau) = (\pi^S_s \circ \Res^G_S) \circ \tau =  \pi^G_s \circ \tau = \mu\circ\Res^G_T\circ\varphi_{s} = \mu \circ \overset{\to}{s}\circ\Res^G_{Ts}. \]
  This means that the following diagram commutes: 
     \[\begin{tikzcd}[ampersand replacement=\&]
        {A^G} \&\&\& {B^S} \\
        \& {A^{Ts}} \\
        \&\& {A^T} \\
        \&\&\& B
        \arrow["{\Res^G_S\circ\tau}", from=1-1, to=1-4]
        \arrow["{\pi^S_s}", from=1-4, to=4-4]
        \arrow["{\Res^G_{Ts}}"', from=1-1, to=2-2]
        \arrow["{\overset{\to}{s}}"', from=2-2, to=3-3]
        \arrow["\mu"', from=3-3, to=4-4]
    \end{tikzcd}\]
  On the other hand, by Lemma \ref{Lemma, transitivity of res}, we can factor the morphism $\Res_{Ts}$, and obtain the commutative diagram 
   \[\begin{tikzcd}[ampersand replacement=\&]
        {A^G} \&\&\& {B^S} \\
        {A^{TS}} \& {A^{Ts}} \\
        \&\& {A^T} \\
        \&\&\& B.
        \arrow["{\Res^G_S\circ\tau}", from=1-1, to=1-4]
        \arrow["{\pi^S_s}", from=1-4, to=4-4]
        \arrow["{\Res^G_{Ts}}", from=1-1, to=2-2]
        \arrow["{\overset{\to}{s}}"', from=2-2, to=3-3]
        \arrow["\mu"', from=3-3, to=4-4]
        \arrow["{\Res^G_{TS}}"', from=1-1, to=2-1]
        \arrow["{\Res^{TS}_{Ts}}"', from=2-1, to=2-2]
    \end{tikzcd}\]

  By  the universal property of the product applied to the morphisms $\mu\circ\overset{\to}{s}\circ\Res^{TS}_{Ts} : A^{TS}\to B$ there exists a unique morphism $\Phi : A^{TS}\to B^S$ that makes the following diagram commute:
    \[\begin{tikzcd}
        {A^G} &&&& {B^S} \\
        & {A^{TS}} \\
        && {A^{Ts}} \\
        &&& {A^T} \\
        &&&& B
        \arrow["{\Res^G_S\circ\tau}", from=1-1, to=1-5]
        \arrow["{\Res^G_{TS}}"', from=1-1, to=2-2]
        \arrow["{\pi^S_s}", from=1-5, to=5-5]
        \arrow["\Phi"', from=2-2, to=1-5]
        \arrow["{\Res^{TS}_{Ts}}"', from=2-2, to=3-3]
        \arrow["{\overset{\to}{s}}"', from=3-3, to=4-4]
        \arrow["\mu"', from=4-4, to=5-5]
    \end{tikzcd}\]
    Hence, by Lemma \ref{Lemma, Universal Product Property}, we have 
     \[ \Phi\circ\Res^G_{TS}=\Res^G_S\circ\tau. \]
     Finally, by (\ref{eq-comp}),  
    \[       \pi_k\circ(\sigma\circ\tau)=\nu\circ (\Res^G_S\circ\tau ) \circ\varphi_{k} = \nu\circ\Phi\circ\Res^G_{TS}\circ\varphi_{k}, \quad \forall k \in G, \]
    which proves that $\sigma\circ\tau$ is a $\C$-cellular automaton over $G$ with local defining morphism $\nu\circ\Phi : A^{TS}\to C$.
    \end{proof}

    Theorem \ref{th-comp} and Example \ref{ex-id} show that the collection of configuration objects $A^G \in \C$ and $\C$-cellular automata form a \emph{subcategory} of $\C$ (see \cite[p. 12]{RomanCat}). We shall denote this category by $\CA_{\C}(G)$.

    \subsection{A categorical Curtis-Hedlund-Lyndon theorem}

The following definition captures, in a categorical way, the essence of a uniformly continuous map in the prodiscrete uniform structure of $A^G$ when $A \in \Set$ (see \cite[Sec. 1.9]{Cec2010}). 

    \begin{definition}[local and uniform morphism]
 Let $A,B\in\C$ be objects and let $I$ be a set. We say that a morphism $\mu : A^I \to B$ is \emph{local} if there exists a finite subset $S\subseteq I$ and a morphism $\mu':A^S \to B$ such that 
 \[ \mu=\mu'\circ\Res^I_S. \]
 We say that a morphism $f:A^I\to B^I$ is \emph{uniform} if, for every $i\in I$, the morphism $\pi^I_i\circ f : A^I \to B$ is local.
\end{definition} 

    \begin{lemma}\label{Le-unif}
        Every $\C$-cellular automaton is uniform.
    \end{lemma}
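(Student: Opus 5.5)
Let $\tau : A^G \to B^G$ be a $\C$-cellular automaton with neighbourhood $S \subseteq G$ (finite) and local defining morphism $\mu : A^S \to B$, so that $\pi^G_g \circ \tau = \mu \circ \Res^G_S \circ \varphi_g$ for all $g \in G$. To show $\tau$ is uniform, we must exhibit, for each fixed $g \in G$, a finite subset of $G$ and a morphism out of the corresponding power of $A$ through which $\pi^G_g \circ \tau$ factors via a restriction morphism. The natural candidate is the translated neighbourhood $Sg$, which is finite since $S$ is, together with the morphism $\mu \circ \overset{\to}{g} : A^{Sg} \to B$.

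The key step is Lemma \ref{Le-trans-res}, which states precisely that $\Res^G_S \circ \varphi_g = \overset{\to}{g} \circ \Res^G_{Sg}$. Substituting this into the defining equation of $\tau$ gives
\[ \pi^G_g \circ \tau = \mu \circ \Res^G_S \circ \varphi_g = (\mu \circ \overset{\to}{g}) \circ \Res^G_{Sg}, \]
so $\pi^G_g \circ \tau$ is local with finite subset $Sg$ and morphism $\mu' := \mu \circ \overset{\to}{g}$. Since $g$ was arbitrary, $\tau$ is uniform.

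There is essentially no obstacle here: all the work has been front-loaded into Lemma \ref{Le-trans-res}, and the only point to check is that $Sg$ is finite, which is immediate because right translation $R_g$ is a bijection of $G$, so $|Sg| = |S|$. The proof is a one-line rewriting of the definition of a $\C$-cellular automaton using that lemma.
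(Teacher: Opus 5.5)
Your proof is correct and follows the paper's own argument: you rewrite $\pi^G_g\circ\tau=\mu\circ\Res^G_S\circ\varphi_g$ as $(\mu\circ\overset{\to}{g})\circ\Res^G_{Sg}$ via Lemma~\ref{Le-trans-res}, and observe that $Sg$ is finite. Your explicit justification that $|Sg|=|S|$ is a small addition the paper leaves implicit.
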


    \begin{proof}
        Let $\tau : A^G \to B^G$ be a $\C$-cellular automaton and let $\mu : A^S \to B$ be a local defining morphism of $\tau$. By Definition \ref{C-CA} and Lemma \ref{Le-trans-res}, for every $g \in G$, we have
      \[ \pi^G_g\circ\tau=\mu\circ\Res^G_S\circ\varphi_{g} = \mu\circ\overset{\to}{g}\circ\Res^G_{Sg}.  \]
      Since $Sg \subseteq G$ is a finite set, this shows that the morphism $\pi_g\circ\tau$ is local for every $g \in G$. Therefore, $\tau$ is uniform. 
    \end{proof}

The main result of this section is a categorical version of the well-known Curtis-Hedlund-Lyndon Theorem \cite[Theorem 1.8.1]{Cec2010}.

  \begin{theorem}
Let $A,B\in\C$ be objects and let $G$ be a group. A morphism $\tau : A^G \to B^G \in \C$ is a $\C$-cellular automaton if and only if $\tau$ is $G$-equivariant and uniform.
    \end{theorem}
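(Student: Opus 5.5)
The forward implication follows directly from earlier results. If $\tau$ is a $\C$-cellular automaton, Proposition \ref{Prop, C-equivariant automata} gives that $\tau$ is $G$-equivariant, and Lemma \ref{Le-unif} gives that $\tau$ is uniform.

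For the converse, suppose $\tau$ is $G$-equivariant and uniform. I will apply uniformity only at the identity $e \in G$. The morphism $\pi^G_e \circ \tau : A^G \to B$ is local, so there is a finite subset $S \subseteq G$ and a morphism $\mu : A^S \to B$ with
\[ \pi^G_e \circ \tau = \mu \circ \Res^G_S . \]
I claim that $S$ is a neighborhood of $\tau$ and $\mu$ is a local defining morphism. To check the defining identity of Definition \ref{C-CA}, fix $g \in G$. Lemma \ref{le-shift-action} (with $h=e$) gives $\pi^G_g = \pi^G_e \circ \varphi_g$, and $G$-equivariance gives $\varphi_g \circ \tau = \tau \circ \varphi_g$. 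Combining these,
\[ \pi^G_g \circ \tau = \pi^G_e \circ \varphi_g \circ \tau = \pi^G_e \circ \tau \circ \varphi_g = \mu \circ \Res^G_S \circ \varphi_g , \]
which is exactly the condition in Definition \ref{C-CA}. So $\tau$ is a $\C$-cellular automaton.

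There is no real obstacle here. The one point to be careful about is the bookkeeping of the shift convention: the paper defines $\varphi_g = R_g^*$ with $\pi^G_{hg} = \pi^G_h \circ \varphi_g$, so the specialization $h=e$ must be stated explicitly. The same computation already appears in the proof of Corollary \ref{cor-eq}, and I would cite it to keep the argument short. Notably, uniformity is used at a single coordinate only. Equivariance then propagates the locality at $e$ to all coordinates with the same finite neighborhood $S$, which replaces the compactness argument of the classical topological proof.
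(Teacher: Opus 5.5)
Your proof is correct and follows essentially the same route as the paper's. The forward direction cites $G$-equivariance and uniformity of $\C$-cellular automata. The converse takes the local factorization $\pi^G_e \circ \tau = \mu \circ \Res^G_S$ and transports it to every coordinate via $\pi^G_g = \pi^G_e \circ \varphi_g$ and $G$-equivariance; the only cosmetic difference is that the paper first extracts $S_g$ and $\mu_g$ for every $g \in G$ but then uses only the pair at $e$.
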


    \begin{proof}
        If $\tau : A^G \to B^G \in \C$ is a $\C$-cellular automaton, then it is $G$-equivariant and uniform by Proposition \ref{Prop, C-equivariant automata} and Lemma \ref{Le-unif}).

        Conversely, suppose that $\tau : A^G \to B^G \in \C$ is a $G$-equivariant and uniform morphism. Since $\tau$ is uniform, for each $g\in G$, there exists a finite subset $S_g\subseteq G$ and a morphism $\mu_g:A^{S_g}\to B$ such that 
        \[ \pi^G_g\circ\tau = \mu_g\circ\Res^G_{S_g}. \]
       By Lemma \ref{le-shift-action} and the $G$-equivariance of $\tau$, we have that for every $g \in G$, 
           \[          \pi^G_g\circ\tau =  \pi^G_e\circ\varphi_{g} \circ\tau =  \pi^G_e\circ\tau\circ\varphi_{g} = \mu_e\circ\Res^G_{S_e}\circ\varphi_{g}, \]
           where $e \in G$ is the group identity. This proves that $\tau$ is a $\C$-cellular automaton with local defining morphism $\mu_e : A^{S_e} \to B$.
    \end{proof}
    

    \subsection{Finite products in the category $\CA_\C(G)$}

    In this section we show that the subcategory $\CA_\C(G)$ of $\C$ is closed under finite products.  
    
    Consider a set $I$ and a morphism $f : A \to B$ in $\C$. By the universal property of the product applied to the family of morphisms $\{ f\circ\pi^I_i :A^I\to B\}_{i\in I}$, there exists a unique morphism
    \[ f_*^I:A^I\to B^I \]
    such that the following diagram commute for every $i \in I$: 
  \begin{equation}\label{eq-power}
  \begin{tikzcd}
        {A^I} & {B^I} \\
        A & B
        \arrow["{f_*^I}", from=1-1, to=1-2]
        \arrow["{\pi^I_i}"', from=1-1, to=2-1]
        \arrow["{\pi^I_i}", from=1-2, to=2-2]
        \arrow["f"', from=2-1, to=2-2]
    \end{tikzcd}
    \end{equation}
where we abuse the notation by denoting both projections by $\pi^I_i$. We call $f_{*}^I : A^I \to B^I$ the \emph{pushforward} of the morphism $f : A \to B$.

    \begin{lemma}\label{le-comp-push}
        For any set $I$ and for any pair of morphisms $f : A \to B$ and $g : B \to C$ in $\C$, we have 
        \[ (g\circ f)_*^I=g_*^I\circ f_*^I \]
    \end{lemma}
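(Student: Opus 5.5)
The plan is to invoke Lemma \ref{Lemma, Universal Product Property} for the product $C^I$. It reduces the claimed equality $(g\circ f)_*^I=g_*^I\circ f_*^I$ of two morphisms $A^I\to C^I$ to checking that both sides agree after composing with every projection $\pi^I_i : C^I \to C$, $i\in I$.

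First I would record what the left-hand side gives. By the defining property (\ref{eq-power}) of the pushforward, applied to the morphism $g\circ f : A\to C$, we have
\[ \pi^I_i\circ (g\circ f)_*^I = (g\circ f)\circ \pi^I_i \quad \forall i\in I, \]
where the projection on the right is the one of $A^I$.

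Next I would compute the right-hand side by pasting the two commutative squares (\ref{eq-power}) for $f$ and for $g$ side by side. Using the square for $g$ and then the square for $f$, together with associativity of composition, this gives
\[ \pi^I_i\circ g_*^I\circ f_*^I = g\circ \pi^I_i\circ f_*^I = g\circ f\circ \pi^I_i \quad \forall i\in I. \]
Here the first projection is that of $C^I$, the middle one that of $B^I$, and the last one that of $A^I$.

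Comparing the two displays, both morphisms satisfy the same equations against every projection. Lemma \ref{Lemma, Universal Product Property} (equivalently, the uniqueness clause in the definition of $(g\circ f)_*^I$) then yields the equality. There is no real obstacle. The only point needing care is the notational abuse of writing $\pi^I_i$ for the projections of three different products $A^I$, $B^I$, $C^I$, so I would state explicitly which projection appears at each step.
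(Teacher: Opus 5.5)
Your proposal is correct and matches the paper's proof essentially step for step. You paste the two pushforward squares to get $\pi^I_i\circ g_*^I\circ f_*^I=(g\circ f)\circ\pi^I_i$, compare this with the defining property of $(g\circ f)_*^I$, and conclude by Lemma \ref{Lemma, Universal Product Property}.
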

    \begin{proof}
      By the construction (\ref{eq-power}) of the morphisms $f_*^I$ and $g_*^I$, the following diagram commutes for every $i \in I$: 
            \[\begin{tikzcd}
            {A^I} & {B^I} & {C^I} \\
            A & B & C
            \arrow["{{f_*^I}}", from=1-1, to=1-2]
            \arrow["{{\pi^I_i}}"', from=1-1, to=2-1]
            \arrow["{g_*^I}", from=1-2, to=1-3]
            \arrow["{{\pi^I_i}}", from=1-2, to=2-2]
            \arrow["{\pi^I_i}", from=1-3, to=2-3]
            \arrow["f"', from=2-1, to=2-2]
            \arrow["g", from=2-2, to=2-3]
        \end{tikzcd}\]
 Therefore,
 \[ \pi^I_i \circ (g_*^I\circ f_*^I) = (g\circ f)\circ\pi^I_i, \quad \forall i\in I. \]
 On the other hand, by the construction of $(g\circ f)_*^I$, we have that
 \[ \pi^I_i \circ (g\circ f)_*^I= (g\circ f) \circ \pi^I_i, \quad \forall i\in I. \]
 The result follows by Lemma \ref{Lemma, Universal Product Property}.
    \end{proof}

    \begin{lemma}\label{le-expG}
    For any group $G$ and any morphism $f : A \to B$ in $\C$, the pushforward $f_*^G : A^G \to B^G$ is a $\C$-cellular automaton over $G$.
    \end{lemma}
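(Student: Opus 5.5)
The plan is to exhibit an explicit neighborhood and local defining morphism, taking $S=\{e\}$ and $\mu = f\circ\pi^{\{e\}}_e : A^{\{e\}}\to B$, as in Example \ref{ex-id}. It then suffices to verify the defining identity of Definition \ref{C-CA}, namely
\[ \pi^G_g\circ f^G_* = f\circ\pi^{\{e\}}_e\circ\Res^G_{\{e\}}\circ\varphi_g \quad \forall g\in G. \]

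The computation is a chain of earlier identities. By the construction (\ref{eq-power}) of the pushforward, $\pi^G_g\circ f^G_* = f\circ \pi^G_g$. By Lemma \ref{le-shift-action} with $h=e$, $\pi^G_g=\pi^G_e\circ\varphi_g$. By the definition (\ref{eq-res}) of the restriction morphism, $\pi^G_e=\pi^{\{e\}}_e\circ\Res^G_{\{e\}}$. Substituting these in order gives
\[ \pi^G_g\circ f^G_* = f\circ\pi^G_e\circ\varphi_g = (f\circ\pi^{\{e\}}_e)\circ\Res^G_{\{e\}}\circ\varphi_g, \]
which is exactly the required condition with the finite neighborhood $\{e\}$.

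An alternative route is through the categorical Curtis--Hedlund--Lyndon theorem proved above. For that, one would show that $f^G_*$ is $G$-equivariant: check that $\pi^G_h\circ f^G_*\circ\varphi_g$ and $\pi^G_h\circ\varphi_g\circ f^G_*$ both equal $f\circ\pi^G_{hg}$, then apply Lemma \ref{Lemma, Universal Product Property}. One would also show that $f^G_*$ is uniform, since $\pi^G_g\circ f^G_*=f\circ\pi^{\{g\}}_g\circ\Res^G_{\{g\}}$. The direct route is shorter, so I would use it.

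I expect no real obstacle. The only point needing care is keeping the two projections $\pi^G_g$, for $A^G$ and for $B^G$, apart, since the paper uses the same notation for both; they must be applied on the correct sides when invoking (\ref{eq-power}).
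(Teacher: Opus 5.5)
Your proof is correct and is essentially the paper's own argument. You use the neighborhood $\{e\}$ and local defining morphism $f\circ\pi^{\{e\}}_e$, and chain the identity (\ref{eq-power}) with Lemma \ref{le-shift-action} and (\ref{eq-res}) in the same order the paper does; your alternative route via the Curtis--Hedlund--Lyndon theorem is also valid but unnecessary.
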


    \begin{proof}
      By the construction (\ref{eq-power}) of $f^G$,   
        \[ \pi^G_g \circ f_*^G = f \circ \pi^G_g, \quad \forall g \in G.  \]
       By Lemma \ref{le-shift-action} and the definition of restriction morphism (\ref{eq-res}), we have that for all $g \in G$, 
        \[ f \circ \pi^G_g = f \circ \pi^G_e \circ \varphi_{g} = f \circ \pi^{\{e\}}_e \circ\Res^G_{\{e\}}\circ\varphi_{g}.  \]
      This proves that $f_*^G : A^G \to B^G$ is a $\C$-cellular automaton with local defining morphism $f \circ \pi^{\{e\}}_e : A^{\{e\}}\to B$.
\end{proof}

Lemmas \ref{le-comp-push} and \ref{le-expG} imply that the pushforward $(\_)_*^G$ defines a covariant functor from $\C$ to $\CA_\C(G)$. 

For any two objects $A, B \in C$, denote by $\pi_A : A \times B \to A$ the projection morphism from the product $A \times B \in \C$ to $A$. By Lemma \ref{le-expG}, the pushforward $\pi_A^G:=(\pi_A)_*^G : (A\times B)^G\to A^G$ is a $\C$-cellular automaton over $G$. 

 \begin{theorem}\label{th-prod}
        Let $A,B\in\C$ be two objects and let $G$ be a group. The configuration object $(A\times B)^G \in\C$ together with the $\C$-cellular automata $\pi_A^G : (A\times B)^G\to A^G$ and $\pi_B^G:(A\times B)^G\to B^G$ is the product of $A^G$ with $B^G$ in the category $\CA_\C(G)$.
    \end{theorem}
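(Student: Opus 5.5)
We verify the universal property of the product directly inside $\CA_\C(G)$. By Lemma \ref{le-expG}, $\pi_A^G$ and $\pi_B^G$ are $\C$-cellular automata. So take a configuration object $X^G$ and two $\C$-cellular automata $\tau : X^G \to A^G$ and $\sigma : X^G \to B^G$, with local defining morphisms $\mu : X^S \to A$ and $\nu : X^T \to B$. We must produce a unique $\C$-cellular automaton $\eta : X^G \to (A\times B)^G$ with $\pi_A^G\circ\eta=\tau$ and $\pi_B^G\circ\eta=\sigma$.

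\emph{Construction.} Let $U := S\cup T$, which is finite. Lemma \ref{Lemma, transitivity of res} gives
\[ \mu\circ\Res^G_S = \mu\circ\Res^U_S\circ\Res^G_U \quad\text{and}\quad \nu\circ\Res^G_T = \nu\circ\Res^U_T\circ\Res^G_U. \]
The universal property of $A\times B$ then yields a unique $\lambda : X^U\to A\times B$ with
\[ \pi_A\circ\lambda=\mu\circ\Res^U_S \quad\text{and}\quad \pi_B\circ\lambda=\nu\circ\Res^U_T. \]
Using the universal property of $(A\times B)^G$, define $\eta$ by
\[ \pi^G_g\circ\eta := \lambda\circ\Res^G_U\circ\varphi_g \quad \text{for all } g\in G. \]
By construction, $\eta$ is a $\C$-cellular automaton with neighbourhood $U$ and local defining morphism $\lambda$.

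\emph{Factorization.} We show $\pi_A^G\circ\eta=\tau$ using Lemma \ref{Lemma, Universal Product Property}. For each $g$,
\[ \pi^G_g\circ\pi_A^G\circ\eta = \pi_A\circ\pi^G_g\circ\eta = \pi_A\circ\lambda\circ\Res^G_U\circ\varphi_g = \mu\circ\Res^U_S\circ\Res^G_U\circ\varphi_g = \mu\circ\Res^G_S\circ\varphi_g = \pi^G_g\circ\tau. \]
The first step uses the defining square (\ref{eq-power}) of the pushforward, and the fourth uses Lemma \ref{Lemma, transitivity of res}. The argument for $\sigma$ is symmetric.

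\emph{Uniqueness.} This is the step needing care, since uniqueness is required among morphisms of $\CA_\C(G)$ rather than of $\C$. It suffices, though, to prove uniqueness in $\C$, which is stronger. Suppose $\eta'$ also satisfies both factorizations. Then for every $g$,
\[ \pi_A\circ\pi^G_g\circ\eta' = \pi^G_g\circ\pi_A^G\circ\eta' = \pi^G_g\circ\tau = \pi_A\circ\pi^G_g\circ\eta, \]
and similarly with $B$. Lemma \ref{Lemma, Universal Product Property} for $A\times B$ gives $\pi^G_g\circ\eta'=\pi^G_g\circ\eta$ for all $g$. Applying the same lemma again for $(A\times B)^G$ gives $\eta'=\eta$.

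Alternatively, one could define $\eta$ directly as the morphism induced by $\pi^G_g\circ\tau$ and $\pi^G_g\circ\sigma$. Its $G$-equivariance and uniformity would then follow quickly, and the categorical Curtis–Hedlund–Lyndon theorem would show it is a cellular automaton. The explicit construction of $\lambda$ above avoids this detour.
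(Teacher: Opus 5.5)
Your proof is correct and follows the paper's argument essentially step for step. It uses the same finite neighbourhood $S\cup T$, the same local rule into $A\times B$ from the universal property of the product in $\C$, the same coordinatewise definition of the induced automaton, and the same verification of the factorizations. The only difference is in uniqueness. The paper compares only the $e$-coordinates and then invokes Corollary \ref{cor-eq}, which needs the competing morphism to be $G$-equivariant, and hence uses that it is a cellular automaton. You compare all coordinates directly, which gives uniqueness already in $\C$. Together with the coordinatewise construction in your closing remark, this recovers the paper's observation that $(A\times B)^G$ is also the product of $A^G$ and $B^G$ in $\C$.
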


    \begin{proof}
       We must show that $(A\times B)^G$ together with $\pi_A^G : (A\times B)^G\to A^G$ and $\pi_B^G:(A\times B)^G\to B^G$ satisfy the universal property of the product in the category $\CA_\C(G)$. In other words, we must show that for any configuration object $C^G$ and any pair of $\C$-cellular automata $\alpha : C^G \to A^G$ and $\beta : C^G \to B^G$ there exists a unique $\C$-cellular automaton $\tau : A^G \to (A \times B)^G$ such that the following diagram commutes: 
\begin{equation} \label{eq-product}
\begin{tikzcd}
            & {C^G} \\
            {A^G} & {(A\times B)^G} & {B^G}
            \arrow["\alpha"', from=1-2, to=2-1]
            \arrow["\tau"', from=1-2, to=2-2]
            \arrow["\beta", from=1-2, to=2-3]
            \arrow["{\pi_A^G}", from=2-2, to=2-1]
            \arrow["{\pi_B^G}"', from=2-2, to=2-3]
        \end{tikzcd}
        \end{equation}
       
     By Definition \ref{C-CA}, then there exist finite subsets $S,T\subseteq G$ and morphisms $\mu:C^S \to A$ and $\eta:C^T\to B$ such that 
     \[ \pi^G _{g} \circ \alpha =\mu \circ\Res^G_S\circ\varphi_{g} \text{  and  } \pi^G_{g} \circ \beta =\eta \circ\Res^G_T\circ \varphi_{g}, \]
     for all $g \in G$. Consider the finite subset $V := S\cup T \subseteq G$. By the universal property of the product in $\C$ applied to the morphisms $\mu \circ \Res^V_S : C^V \to A$ and $\eta \circ \Res^V_T : C^V \to B$,  there is a unique morphism $\nu:C^V \to (A\times B)$ in $\C$ such that the following diagram commutes: 
      \[\begin{tikzcd}
            & {C^V} \\
            {C^S} && {C^T} \\
            A & {A\times B} & B
            \arrow["{{\Res^V_S}}"', from=1-2, to=2-1]
            \arrow["{{\Res^V_T}}", from=1-2, to=2-3]
            \arrow["\nu"{description}, from=1-2, to=3-2]
            \arrow["\mu"', from=2-1, to=3-1]
            \arrow["\eta", from=2-3, to=3-3]
            \arrow["{\pi_A}", from=3-2, to=3-1]
            \arrow["{\pi_B}"', from=3-2, to=3-3]
        \end{tikzcd}\]
        
 Again, by the universal property of the product in $\C$ applied to the family of morphisms $\{ \nu \circ \Res^G_V \circ\varphi_{g} : C^G\to (A\times B) \}_{g \in G}$ there exists a unique morphism $\tau:C^G\to (A\times B)^G$ such that
 \[ \pi^G_g \circ \tau = \nu \circ \Res^G_V \circ\varphi_{g}, \quad \forall g \in G. \]
This implies that $\tau : C^G \to (A \times B)^G$ is a $\C$-cellular automaton over $G$. 

By Lemma \ref{Lemma, transitivity of res} and the definition of $\nu:C^V \to (A\times B)$, we have that, for all $g \in G$,
        \begin{eqnarray*}
            \pi^G_g \circ \alpha &=& \mu \circ\Res^G_S\circ\varphi_{g} \\
                                          &=& \mu \circ (\Res^G_S\circ\Res^S_V )\circ\varphi_{g} \\ 
                                          &=& (\pi_A\circ\nu) \circ\Res^S_V\circ\varphi_{g} \\
                                           &=& \pi_A \circ (\pi^G_g \circ \tau) \\ 
                                           &=& \pi^G_g\circ\pi_A^G\circ\tau,
        \end{eqnarray*}
        where the last equality follows by the definition of the pushforward morphism $\pi_A^G$ (\ref{eq-power}). By Lemma \ref{Lemma, Universal Product Property}, $\alpha =\pi_A^G\circ\tau$. Similarly, $\beta = \pi_B^G \circ \tau$, which proves that diagram (\ref{eq-product}) commutes.  

 Now, in order to show the uniqueness of $\tau$, suppose that $\sigma:C^G\to (A\times B)^G$ is a $\C$-cellular automaton over $G$ such that $\alpha=\pi_A^G\circ\sigma$ and $\beta=\pi_B^G\circ\sigma$. Observe that for $X \in \{A,B\}$, 
 \[ \pi_X \circ \pi^G_e \circ \tau =\pi^G_e \circ \pi_X^G \circ\tau = \pi^G_e \circ \pi_X^G \circ \sigma= \pi_X \circ \pi^G_e \circ\sigma.  \]
   Hence, by Lemma \ref{Lemma, Universal Product Property}, we have $\pi_e^G \circ \tau = \pi_e^G \circ \sigma$. By Corollary \ref{cor-eq}, we have that $\tau=\sigma$.
    \end{proof}

The product of configuration objects $A^G$ and $B^G$ in $\CA_\C(G)$ given in Theorem \ref{th-prod} is isomorphic to the product $A^G \times B^G$ in $\C$; as products may be defined as small limits, this is a particular case of the general categorical principle that small limits commute with small limits \cite[p. 231]{Sau1970}. This shows that $\CA_\C(G)$ is closed under finite products.


\section{Generalized $\C$-cellular automata}

The main idea in the definition of \emph{generalized cellular automata}, introduced in \cite{Vaz2022}, is to consider cellular automata between configuration spaces over different groups. In this section, we provide a categorical definition for generalized cellular automata. 

For any two groups $G$ and $H$, denote by $\Hom(H,G)$ the set of all group homomorphisms from $H$ to $G$. 

\begin{definition}
Let $A^{G},B^{H}\in \C$ be two configuration objects, and let $\tau:A^{G}\rightarrow B^{H}$ be a morphism in $\C$. We say that $\tau$ is a $\phi$-cellular automaton, with $\phi \in \Hom(H,G)$, if there exist a finite subset $S\subseteq G$, and a morphism $\mu :A^{S}\to B\in \C$ such that:
\[ \pi _{h}\circ \tau =\mu \circ\Res^G_S\circ\varphi_{\phi (h)}, \quad \forall h\in H. \]
\end{definition}

In general, we say that $\tau:A^{G}\rightarrow B^{H}$ is a \emph{generalized $\C$-cellular automaton} if it is a $\phi$-cellular automaton for some $\phi \in \Hom(H,G)$.	

The proof of the following result is analogous to the proof of Proposition \ref{Prop, C-equivariant automata}.

\begin{lemma}\label{Lemma, phi equivariance}
Every $\phi$-cellular automaton $\tau:A^{G}\rightarrow B^{H}$ is $\phi$-equivariant in the sense that
\[ \tau\circ\varphi_{\phi(h)}=\varphi_h\circ\tau, \quad \forall h\in H. \]
\end{lemma}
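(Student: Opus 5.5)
The plan mirrors the proof of Proposition \ref{Prop, C-equivariant automata}: show that both sides of the claimed identity agree after composing with every projection of the codomain $B^H$, and then conclude with Lemma \ref{Lemma, Universal Product Property}. Both $\tau\circ\varphi_{\phi(h)}$ and $\varphi_h\circ\tau$ are morphisms $A^G\to B^H$. Here $\varphi_{\phi(h)}$ is the shift on $A^G$ and $\varphi_h$ is the shift on $B^H$. So it suffices to prove
\[ \pi^H_k\circ\varphi_h\circ\tau=\pi^H_k\circ\tau\circ\varphi_{\phi(h)} \quad \text{for all } k,h\in H. \]

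Fix $k,h\in H$ and let $\mu:A^S\to B$ be a local defining morphism of $\tau$. Applying Lemma \ref{le-shift-action} in $B^H$ gives $\pi^H_k\circ\varphi_h=\pi^H_{kh}$. The definition of $\phi$-cellular automaton then yields
\[ \pi^H_k\circ\varphi_h\circ\tau=\pi^H_{kh}\circ\tau=\mu\circ\Res^G_S\circ\varphi_{\phi(kh)}. \]
Since $\phi$ is a group homomorphism, $\phi(kh)=\phi(k)\phi(h)$. The multiplicativity part of Lemma \ref{le-shift-action}, applied in $A^G$, then gives $\varphi_{\phi(kh)}=\varphi_{\phi(k)}\circ\varphi_{\phi(h)}$. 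Regrouping and using the defining identity once more, now at $k$, we get
\[ \mu\circ\Res^G_S\circ\varphi_{\phi(k)}\circ\varphi_{\phi(h)}=\pi^H_k\circ\tau\circ\varphi_{\phi(h)}. \]
This is the required equality of components, and Lemma \ref{Lemma, Universal Product Property} finishes the proof.

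The only step needing care is keeping track of which shift action acts on which configuration object. We must also check that the order of factors in $\phi(kh)=\phi(k)\phi(h)$ matches the convention $\varphi_{gh}=\varphi_g\circ\varphi_h$ of Lemma \ref{le-shift-action}, which comes from right translations. With those conventions it matches exactly, and everything else is a formal calculation.
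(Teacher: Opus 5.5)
Your proof is correct and is exactly the argument the paper intends when it says the proof is analogous to that of Proposition \ref{Prop, C-equivariant automata}. You compare components $\pi^H_k$, use $\pi^H_k\circ\varphi_h=\pi^H_{kh}$ and $\varphi_{\phi(kh)}=\varphi_{\phi(k)}\circ\varphi_{\phi(h)}$ (your check of the right-translation convention is accurate), and conclude with Lemma \ref{Lemma, Universal Product Property}.
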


Furthermore, analogous to Theorem \ref{th-comp}, the composition of two generalized $\C$-cellular automata is a generalized $\C$-cellular automaton (c.f., \cite[Theorem 2]{Vaz2022}).

\begin{theorem}\label{Theory, ACG composition}
Let $\tau:A^{G}\to B^{H}$ be a $\phi$-cellular automaton with neighborhood $T \subseteq G$ and let $\sigma : B^{H}\to C^{K}$ be a $\psi$-cellular automaton with neighborhood $S \subseteq H$. Then, $\sigma \circ \tau : A^G \to C^K$ is a $\phi \circ \psi$-cellular automaton with neighborhood $T\phi(S) := \{ t\phi(s) : t \in T, s \in S\}$. 
\end{theorem}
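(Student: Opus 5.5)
The plan is to mimic the proof of Theorem \ref{th-comp}, replacing $G$-equivariance by the $\phi$-equivariance of Lemma \ref{Lemma, phi equivariance}. Let $\mu : A^T \to B$ and $\nu : B^S \to C$ be local defining morphisms of $\tau$ and $\sigma$. For every $k \in K$, the definition of $\sigma$ and Lemma \ref{Lemma, phi equivariance} applied to $\tau$ with $h = \psi(k)$ give
\[ \pi^K_k \circ \sigma \circ \tau = \nu \circ \Res^H_S \circ \varphi_{\psi(k)} \circ \tau = \nu \circ \Res^H_S \circ \tau \circ \varphi_{\phi(\psi(k))}. \]
This reduces the problem to showing that $\Res^H_S \circ \tau : A^G \to B^S$ factors as $\Phi \circ \Res^G_{T\phi(S)}$ for some morphism $\Phi : A^{T\phi(S)} \to B$-valued, namely $\Phi : A^{T\phi(S)} \to B^S$. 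Once this holds, $\nu \circ \Phi$ is a local defining morphism for $\sigma \circ \tau$ with respect to $\phi \circ \psi$, and $T\phi(S)$ is finite because $T$ and $S$ are.

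To build $\Phi$, fix $s \in S$. Using (\ref{eq-res}), the definition of $\tau$ and Lemma \ref{Le-trans-res} with $g = \phi(s) \in G$, I will compute
\[ \pi^S_s \circ \Res^H_S \circ \tau = \pi^H_s \circ \tau = \mu \circ \Res^G_T \circ \varphi_{\phi(s)} = \mu \circ \overset{\to}{\phi(s)} \circ \Res^G_{T\phi(s)}. \]
Since $T\phi(s) \subseteq T\phi(S) \subseteq G$, Lemma \ref{Lemma, transitivity of res} lets me rewrite $\Res^G_{T\phi(s)} = \Res^{T\phi(S)}_{T\phi(s)} \circ \Res^G_{T\phi(S)}$. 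The universal property of $B^S$, applied to the family $\{\mu \circ \overset{\to}{\phi(s)} \circ \Res^{T\phi(S)}_{T\phi(s)}\}_{s\in S}$, then yields a unique $\Phi : A^{T\phi(S)} \to B^S$ with $\pi^S_s \circ \Phi$ equal to these morphisms. It follows that $\pi^S_s \circ \Phi \circ \Res^G_{T\phi(S)} = \pi^S_s \circ \Res^H_S \circ \tau$ for all $s$, and Lemma \ref{Lemma, Universal Product Property} gives $\Phi \circ \Res^G_{T\phi(S)} = \Res^H_S \circ \tau$.

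The only step that needs real care is the bookkeeping of indices. The shift $\varphi_{\psi(k)}$ lives on $B^H$, while $\varphi_{\phi(\psi(k))}$ lives on $A^G$, and Lemma \ref{Lemma, phi equivariance} must be applied with $h=\psi(k)$. In addition, the translation morphism $\overset{\to}{\phi(s)}$ is defined from the subset $T$ of $G$, with $s$ replaced by $\phi(s)$. Finally, since the paper writes the composite homomorphism as $\phi\circ\psi : K \to G$, I will check that the shift index is exactly $\phi(\psi(k))$, which matches the displayed equation above.
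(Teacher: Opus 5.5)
Your proof is correct, and it is essentially the paper's approach. The paper gives no separate proof of this theorem and only notes that it is analogous to Theorem~\ref{th-comp}. Your argument is exactly that adaptation: you use $\phi$-equivariance with $h=\psi(k)$, then Lemma~\ref{Le-trans-res} with $g=\phi(s)$ and the universal property of $B^S$ to factor $\Res^H_S\circ\tau$ as $\Phi\circ\Res^G_{T\phi(S)}$, which gives the local defining morphism $\nu\circ\Phi$.
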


The previous theorem implies that the class of configuration objects $A^G$ in $\C$, where $G$ is any group, together with the class of generalized $\C$-cellular automata, form a subcategory of $\C$ that we denote by $\GCA_\C$. It is clear that, for any group $G$, the category $\CA_\C(G)$ is a subcategory of $\GCA_\C$, since every $\C$-cellular automaton is an $\id$-cellular automaton.  

\begin{lemma}\label{ex-star}
For any group homomorphism $\phi:H\to G$ and any object $A \in \C$, the pullback $\phi_A^*:A^G\to A^H$, as defined in (\ref{eq-star}), is a $\phi$-cellular automaton.
 \end{lemma}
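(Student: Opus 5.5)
The plan is to mimic Example \ref{ex-id} and Lemma \ref{le-expG}, taking the neighborhood $S=\{e\}$, where $e\in G$ is the identity. The local defining morphism will be the projection $\pi^{\{e\}}_e : A^{\{e\}}\to A$. We must then verify that
\[ \pi^H_h\circ\phi^*_A = \pi^{\{e\}}_e\circ\Res^G_{\{e\}}\circ\varphi_{\phi(h)}, \quad \forall h\in H, \]
where $\varphi$ denotes the shift action of $G$ on $A^G$.

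First, the definition (\ref{eq-star}) of the pullback of the function $\phi : H\to G$ gives directly
\[ \pi^H_h\circ\phi^*_A = \pi^G_{\phi(h)} \quad \forall h \in H. \]
Next, Lemma \ref{le-shift-action}, applied with the pair $(e,\phi(h))$, gives $\pi^G_{\phi(h)} = \pi^G_{e\phi(h)} = \pi^G_e\circ\varphi_{\phi(h)}$. Finally, the definition (\ref{eq-res}) of the restriction morphism gives $\pi^G_e = \pi^{\{e\}}_e\circ\Res^G_{\{e\}}$. Chaining these three equalities yields the required identity for every $h\in H$. Hence $\phi^*_A$ is a $\phi$-cellular automaton with neighborhood $\{e\}$ and local defining morphism $\pi^{\{e\}}_e$.

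There is no genuine obstacle here. The only point needing care is the bookkeeping of indices: the projection $\pi_h$ in the definition of a $\phi$-cellular automaton is the projection $\pi^H_h$ of $B^H=A^H$, while $\varphi_{\phi(h)}$ is the shift on $A^G$. Once these are matched, the defining property of the pullback and the shift-action lemma do all the work. Notably, the homomorphism property of $\phi$ is not even needed for this verification; it only matters for composition, as in Theorem \ref{Theory, ACG composition}.
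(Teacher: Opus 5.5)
Your proposal is correct and matches the paper's proof: it uses the same neighborhood $\{e\}$, the same local defining morphism $\pi^{\{e\}}_e$, and the same chain $\pi^H_h\circ\phi^*_A=\pi^G_{\phi(h)}=\pi^G_e\circ\varphi_{\phi(h)}=\pi^{\{e\}}_e\circ\Res^G_{\{e\}}\circ\varphi_{\phi(h)}$, justified by (\ref{eq-star}), Lemma \ref{le-shift-action}, and (\ref{eq-res}). Your side remark is also accurate: the computation never uses that $\phi$ is a homomorphism, which is needed only because the definition of a $\phi$-cellular automaton requires $\phi\in\Hom(H,G)$.
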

 \begin{proof}
By the definition of pullback (\ref{eq-star}), Lemma \ref{le-shift-action}, and the definition of restriction morphism \ref{eq-res}, we have that for all $h\in H$, 
\[ \pi^H_h\circ\phi_A^*= \pi^G_{\phi(h)} = \pi^G_e\circ\varphi_{\phi(h)}=\pi^{\{e\}}_e \circ\Res^G_{\{e\}}\circ\varphi_{\phi(h)}. \]
This shows that $\phi_A^*:A^G\to A^H$ is a $\phi$-cellular automaton. 
\end{proof}
 
 Lemmas \ref{ex-star} and \ref{le-comp-pullback} imply that, for any fixed object $A \in \C$, the pullback $(\_)_A^*$ defines a contravariant functor from $\Grp$ to $\GCA_\C$.

The following result shows that every $\phi$-cellular automaton may be factorized as a composition of a unique $\C$-cellular automaton with a pullback of $\phi$ (c.f. \cite[Lemma 4]{Further}).

\begin{lemma}\label{ACG Factorization} 
For any generalized $\C$-cellular automaton $\tau:A^G\to B^H$ in $\GCA_\C$, there exists a group homomorphism $\phi:H\to G$ and a unique $\C$-cellular automaton $\tau_G : A^G\to B^G$ in $\CA_\C(G)$ such that 
\[ \tau=\phi_B^*\circ\tau_G \]
\end{lemma}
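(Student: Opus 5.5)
Since $\tau$ is a generalized $\C$-cellular automaton, by definition there is some $\phi\in\Hom(H,G)$, a finite $S\subseteq G$ and a morphism $\mu:A^S\to B$ with $\pi^H_h\circ\tau=\mu\circ\Res^G_S\circ\varphi_{\phi(h)}$ for all $h\in H$. We fix this $\phi$ and build $\tau_G$ directly from the same local data. By the universal property of the product $B^G$ applied to the family $\{\mu\circ\Res^G_S\circ\varphi_g : A^G\to B\}_{g\in G}$, there is a unique morphism $\tau_G:A^G\to B^G$ with
\[ \pi^G_g\circ\tau_G=\mu\circ\Res^G_S\circ\varphi_g, \quad \forall g\in G. \]
By Definition \ref{C-CA}, $\tau_G$ is a $\C$-cellular automaton over $G$ with neighborhood $S$ and local defining morphism $\mu$.

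Next we verify the factorization $\tau=\phi_B^*\circ\tau_G$ using Lemma \ref{Lemma, Universal Product Property} on $B^H$. For each $h\in H$, the defining property (\ref{eq-star}) of the pullback gives $\pi^H_h\circ\phi_B^*=\pi^G_{\phi(h)}$. Hence
\[ \pi^H_h\circ\phi_B^*\circ\tau_G=\pi^G_{\phi(h)}\circ\tau_G=\mu\circ\Res^G_S\circ\varphi_{\phi(h)}=\pi^H_h\circ\tau, \]
and so the two morphisms agree.

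For uniqueness, fix $\phi$ as above and suppose $\sigma:A^G\to B^G$ is a $\C$-cellular automaton with $\tau=\phi_B^*\circ\sigma$. We want $\sigma=\tau_G$, and the natural tool is Corollary \ref{cor-eq}: both are $G$-equivariant by Proposition \ref{Prop, C-equivariant automata}, so it suffices to show $\pi^G_e\circ\sigma=\pi^G_e\circ\tau_G$. Since $\phi(e_H)=e_G$, we get
\[ \pi^G_e\circ\sigma=\pi^G_{\phi(e)}\circ\sigma=\pi^H_e\circ\phi_B^*\circ\sigma=\pi^H_e\circ\tau, \]
and the same computation applies to $\tau_G$. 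Corollary \ref{cor-eq} then gives $\sigma=\tau_G$.

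The only delicate point is the uniqueness step. If uniqueness were required across \emph{all} possible choices of $\phi$, it would fail in general, for instance when $H$ is trivial. The statement must therefore be read with $\phi$ fixed. Under that reading, the key observation is that the identity $e$ always lies in the image of $\phi$, which is exactly what Corollary \ref{cor-eq} needs. All other steps are routine applications of the universal property.
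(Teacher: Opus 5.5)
Your proof is correct and follows the paper's argument step for step. It uses the same construction of $\tau_G$ from the local data $\mu$, the same componentwise check of $\tau=\phi_B^*\circ\tau_G$, and the same uniqueness argument via $\phi(e_H)=e_G$ and Corollary~\ref{cor-eq}.

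Your closing remark is wrong, though. The computation $\pi^G_{e_G}\circ\sigma=\pi^H_{e_H}\circ\tau$ works for any factorization $\tau=\psi_B^*\circ\sigma$ with $\psi\in\Hom(H,G)$ arbitrary. So $\tau_G$ is unique regardless of which homomorphism is used; only the homomorphism itself can be non-unique, as the paper remarks after Theorem~\ref{th-weak}. Also, a trivial $H$ admits only one homomorphism $H\to G$, so it cannot illustrate the failure you describe.
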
 

\begin{proof} 
Let $\mu:A^S\to B$ be a local defining morphism for $\tau : A^G \to B^H$. By the universal property of the product applied to the family of morphisms $\{ \mu\circ\Res^G_S\circ\varphi_{g}:A^G\to B\}_{g \in G}$, there exists a unique morphism $\tau_G : A^G \to B^G$ such that 
\[ \pi^G_g\circ\tau_G = \mu\circ\Res^G_S\circ\varphi_{g}, \quad \forall g \in G.   \]
It is clear that $\tau_G:A^G\to B^G$ is a $\C$-cellular automaton. 

By Lemma \ref{ex-star}, for each $h\in H$ we have 
\[ \pi^H_h \circ \phi_B^* \circ \tau_G =  \pi^G_{\phi(h)} \circ \tau_G= \mu \circ \Res^G_S \circ \varphi_{\phi(h)} =\pi^H_h \circ \tau. \]
By Lemma \ref{Lemma, Universal Product Property}, we obtain that $\tau=\phi_B^*\circ\tau_G$. 

To prove the uniqueness of $\tau_G$, let $\sigma : A^G \to B^G$ be a $\C$-cellular automaton such that $\tau=\phi_B^*\circ\sigma$. If $e_G \in G$ and $e_H \in H$ are the corresponding identity elements, then 
\[ \pi^G_{e_G} \circ \sigma = \pi^G_{\phi(e_H)}\circ\sigma = \pi^H_{e_H} \circ\phi_B^*\circ\sigma=\pi^H_{e_H}\circ\tau = \pi^H_{e_H} \circ\phi_B^*\circ\tau_G = \pi^G_{e_G} \circ\tau_G \]
By Corollary \ref{cor-eq}, $\sigma=\tau_G$. 
\end{proof}

In order to prove that the category $\GCA_\C$ has a weak product, we need the following technical result about the commutativity of pullbacks and pushforwards. 

    \begin{lemma}\label{TN estrella}
For any morphism $f : A \to B$ in $\C$ and any group homomorphism $\phi : H \to G$, the following square commutes:
        \[\begin{tikzcd}
            {A^G} & {A^H} \\
            {B^G} & {B^H}
            \arrow["{\phi_A^*}", from=1-1, to=1-2]
            \arrow["{f_*^G}"', from=1-1, to=2-1]
            \arrow["{f_*^H}", from=1-2, to=2-2]
            \arrow["{\phi_B^*}"', from=2-1, to=2-2]
        \end{tikzcd}\]
    \end{lemma}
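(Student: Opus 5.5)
We will show that the two composites $f_*^H\circ\phi_A^*$ and $\phi_B^*\circ f_*^G$, both morphisms $A^G\to B^H$ into a product, agree. By Lemma \ref{Lemma, Universal Product Property}, it suffices to check that they agree after composing with every projection $\pi^H_h : B^H\to B$, $h\in H$. The whole argument is a short chase using only the defining equations (\ref{eq-star}) of the pullback and (\ref{eq-power}) of the pushforward.

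Fix $h\in H$. For the first composite, the pushforward square (\ref{eq-power}) with $I=H$ gives $\pi^H_h\circ f_*^H = f\circ\pi^H_h$. The defining property (\ref{eq-star}) of $\phi_A^*$ then gives $\pi^H_h\circ\phi_A^* = \pi^G_{\phi(h)}$. Hence
\[ \pi^H_h\circ f_*^H\circ\phi_A^* = f\circ\pi^G_{\phi(h)}. \]
For the second composite, (\ref{eq-star}) for $\phi_B^*$ gives $\pi^H_h\circ\phi_B^* = \pi^G_{\phi(h)}$. Applying (\ref{eq-power}) with $I=G$ at the index $\phi(h)$ then gives
\[ \pi^H_h\circ\phi_B^*\circ f_*^G = \pi^G_{\phi(h)}\circ f_*^G = f\circ\pi^G_{\phi(h)}. \]

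The two expressions coincide for every $h\in H$, so Lemma \ref{Lemma, Universal Product Property} yields $f_*^H\circ\phi_A^* = \phi_B^*\circ f_*^G$. There is no real obstacle here. The only care needed is bookkeeping: the projections $\pi^H_h$ and $\pi^G_{\phi(h)}$ live on different objects ($A^H$ versus $B^H$, and $A^G$ versus $B^G$), and the notation in (\ref{eq-power}) deliberately abuses this. The homomorphism property of $\phi$ is never used, so the same argument works for the pullback of any function $\phi:H\to G$.
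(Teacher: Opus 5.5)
Your proof is correct and is essentially the paper's argument: the paper's diagram consists of the same two pullback triangles and the two pushforward squares (at indices $\phi(h)$ and $h$) that you chain together, and it concludes via Lemma~\ref{Lemma, Universal Product Property}. Your remark that the homomorphism property of $\phi$ is never used is also correct, since the pullback is defined for any function between index sets.
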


    \begin{proof}
       For any $h \in H$, consider the following diagram:
        \[\begin{tikzcd}
            {A^G} && {A^H} \\
            & A \\
            \\
            {B^G} && {B^H} \\
            & B
            \arrow["{{\phi_A^*}}", from=1-1, to=1-3]
            \arrow["{\pi^G_{\phi(h)}}"', from=1-1, to=2-2]
            \arrow["{{f_*^G}}"', from=1-1, to=4-1]
            \arrow["{\pi^H_h}", from=1-3, to=2-2]
            \arrow["{{f_*^H}}", from=1-3, to=4-3]
            \arrow[""{name=0, anchor=center, inner sep=0}, "{{\phi_B^*}}", from=4-1, to=4-3]
            \arrow["{\pi^G_{\phi(h)}}"', from=4-1, to=5-2]
            \arrow["{\pi^H_h}", from=4-3, to=5-2]
            \arrow["f"', between={0}{0.8}, no head, from=2-2, to=0]
            \arrow[between={0.2}{1}, from=0, to=5-2]
        \end{tikzcd}\]

Observe that the upper and lower triangles of this diagram commute by the definition of pullback (\ref{eq-star}). Furthermore, the two squares with edge $f$ commute by the definition (\ref{eq-power}) of the power morphisms $f_*^G$ and $f_*^H$. Therefore, the above diagram commutes, so 
\[ \pi^H_h\circ (\phi_B^*\circ f_*^G) = \pi^H_h\circ (f_*^H\circ\phi_A^*), \quad \forall h \in H. \]
Finally, by Lemma \ref{Lemma, Universal Product Property}, $\phi_B^*\circ f_*^G= f_*^H\circ\phi_A^*$, as required.
    \end{proof}

For any two groups $G$ and $H$, consider the canonical embeddings $\iota_G :G\to G\ast H$ and $\iota_H : H\to G\ast H$ into the free product $G \ast H$, which is the coproduct in the category $\Grp$ (see Example \ref{Ex-cop}). For any $A, B \in \C$, consider the following generalized $\C$-cellular automata:
\begin{align*}
 \iota_{A}^G & :=\pi_A^G\circ (\iota_G)_{A \times B}^* : (A \times B)^{G \ast H} \to (A \times B)^G \to A^G    \\
 \iota_{B}^H & :=\pi_B^H\circ (\iota_H)_{A \times B}^* : (A \times B)^{G \ast H} \to (A \times B)^H \to B^H,
 \end{align*}
 where $\pi_A^G : (A \times B)^G \to A^G$ and $\pi_B^H : (A \times B)^H \to B^H$ are the $\C$-cellular automata projections defined in Theorem \ref{th-prod}.
    
    \begin{theorem}\label{th-weak}
    Let $G$ and $H$ be two groups and let $A$ and $B$ be two objects in $\C$. The configuration object $(A\times B)^{(G*H)}$ together with the generalized $\C$-cellular automata $\iota_{A}^G  : (A \times B)^{G \ast H} \to A^G$ and $\iota_{B}^H : (A \times B)^{G \ast H} \to B^H$ is a weak product of $A^G$ and $B^H$ in the category $\GCA_\C$.
    \end{theorem}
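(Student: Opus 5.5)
The plan is to verify the weak universal property directly: construct a mediating generalized $\C$-cellular automaton, and make no claim of uniqueness. Let $C^K$ be a configuration object, and let $\alpha : C^K \to A^G$ and $\beta : C^K \to B^H$ be generalized $\C$-cellular automata. Then $\alpha$ is a $\phi$-cellular automaton for some $\phi \in \Hom(G,K)$, with local defining morphism $\mu : C^S \to A$. Likewise $\beta$ is a $\psi$-cellular automaton for some $\psi \in \Hom(H,K)$, with local defining morphism $\eta : C^T \to B$, where $S,T \subseteq K$ are finite.

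The first step uses that $G \ast H$ is the coproduct in $\Grp$ (Example \ref{Ex-cop}). This gives a group homomorphism $\theta : G\ast H \to K$ with $\theta\circ\iota_G = \phi$ and $\theta \circ \iota_H = \psi$. This $\theta$ will be the homomorphism witnessing that the mediating morphism is a generalized $\C$-cellular automaton.

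The second step mimics the proof of Theorem \ref{th-prod}.
\begin{itemize}
\item Put $V := S \cup T$. Use the universal property of $A\times B$ to obtain $\nu : C^V \to A\times B$ with $\pi_A\circ\nu = \mu\circ \Res^V_S$ and $\pi_B\circ\nu = \eta\circ\Res^V_T$.
\item Apply the universal property of $(A\times B)^{G\ast H}$ to the family $\{\nu\circ\Res^K_V\circ\varphi_{\theta(w)}\}_{w\in G\ast H}$. This yields $\tau : C^K \to (A\times B)^{G\ast H}$ with
\[ \pi^{G\ast H}_w\circ\tau = \nu\circ\Res^K_V\circ\varphi_{\theta(w)}, \quad \forall w\in G\ast H. \]
\end{itemize}
By definition, $\tau$ is then a $\theta$-cellular automaton, and in particular a morphism of $\GCA_\C$.

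The third step checks that $\iota_A^G\circ\tau=\alpha$; the identity $\iota_B^H\circ\tau=\beta$ is symmetric. By Lemma \ref{Lemma, Universal Product Property} it suffices to compare $\pi^G_g\circ\iota_A^G\circ\tau$ with $\pi^G_g\circ\alpha$ for each $g\in G$. Unwinding $\iota_A^G = \pi_A^G\circ(\iota_G)^*_{A\times B}$ uses two facts: the defining square (\ref{eq-power}) of the pushforward, and the defining identity (\ref{eq-star}) of the pullback. Together they give
\[ \pi^G_g\circ\iota_A^G\circ\tau = \pi_A\circ\pi^{G\ast H}_{\iota_G(g)}\circ\tau = \pi_A\circ\nu\circ\Res^K_V\circ\varphi_{\theta(\iota_G(g))}. \]
Next use the defining property of $\nu$, Lemma \ref{Lemma, transitivity of res} (so that $\Res^V_S\circ\Res^K_V = \Res^K_S$), and $\theta\circ\iota_G=\phi$. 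These turn the right-hand side into $\mu\circ\Res^K_S\circ\varphi_{\phi(g)} = \pi^G_g\circ\alpha$, as required.

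The main obstacle is this last bookkeeping step. One has to keep straight which index set each projection lives on, and correctly commute the pushforward $\pi_A^G$ past the pullback $(\iota_G)^*$; Lemma \ref{TN estrella} is the available tool if one prefers to swap them first. The point to be careful about is that the coproduct homomorphism $\theta$ aligns the two shift actions: the shifts by $\phi(g)$ and $\psi(h)$ in $K$ must both be of the form $\theta(w)$.

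Uniqueness genuinely fails in general, and no uniqueness claim is needed. For example, the choice of $\theta$ is forced, but a different local defining morphism, or a larger neighbourhood, could produce another mediating morphism once $G\ast H$ is involved. This is why only a weak product is asserted.
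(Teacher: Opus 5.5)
Your proof is correct. Your mediator is the same morphism as the paper's, but you reach it by a more direct, unfolded route.

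The paper argues modularly. It factors $\alpha=\phi_A^*\circ\alpha_K$ and $\beta=\psi_B^*\circ\beta_K$ by Lemma \ref{ACG Factorization}. It then takes the mediator $\tau_K : C^K\to(A\times B)^K$ of the product in $\CA_\C(K)$ from Theorem \ref{th-prod}. Next it applies the contravariant pullback functor to the coproduct triangle $\gamma\circ\iota_G=\phi$, $\gamma\circ\iota_H=\psi$ in $\Grp$. Finally it glues the diagrams with the naturality squares of Lemma \ref{TN estrella}, so its mediator is $\gamma^*_{A\times B}\circ\tau_K$. Since $\tau_K$ is built in Theorem \ref{th-prod} from the same $\nu$, one has $\pi^{G\ast H}_w\circ\gamma^*_{A\times B}\circ\tau_K=\pi^K_{\gamma(w)}\circ\tau_K=\nu\circ\Res^K_V\circ\varphi_{\gamma(w)}$. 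So this is your $\tau$ with $\theta=\gamma$.

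Your version folds the proof of Theorem \ref{th-prod} and the needed instance of Lemma \ref{TN estrella} into a single use of the universal property of $(A\times B)^{G\ast H}$, plus an index-by-index check. It is self-contained. It shows straight from the definition that the mediator is a $\theta$-cellular automaton with neighbourhood $S\cup T$ and local defining morphism $\nu$. The paper, by contrast, implicitly needs Lemma \ref{ex-star} and the composition result Theorem \ref{Theory, ACG composition} (stated there without proof) to know that $\gamma^*_{A\times B}\circ\tau_K$ lies in $\GCA_\C$. In exchange, the paper's route explains the weak product conceptually: it is the product in $\CA_\C(K)$ transported along the pullback functor applied to the coproduct $G\ast H$. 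It also makes visible where non-uniqueness enters.

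Your closing remark on uniqueness is not needed for the proof, but it misidentifies the cause. For a fixed $\theta$ the mediator is unique. Any $\theta$-cellular automaton $\tau'$ satisfies $\pi^{G\ast H}_w\circ\tau'=(\pi^{G\ast H}_e\circ\tau')\circ\varphi_{\theta(w)}$. The equations $\iota_A^G\circ\tau'=\alpha$ and $\iota_B^H\circ\tau'=\beta$ force $\pi_A\circ\pi^{G\ast H}_e\circ\tau'=\pi^G_e\circ\alpha$ and $\pi_B\circ\pi^{G\ast H}_e\circ\tau'=\pi^H_e\circ\beta$. So other neighbourhoods or local defining morphisms give nothing new. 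What is not forced is $\theta$ itself, because $\phi$ and $\psi$ are not determined by $\alpha$ and $\beta$; for instance, in $\Set$ a constant $\alpha$ factors through $\phi_A^*$ for every $\phi$. This non-uniqueness of the factorization is exactly the obstruction the paper points out after the theorem.
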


    \begin{proof}
      Let $\alpha : C^K \to A^G$ and $\beta : C^K \to B^H$ be two generalized $\C$-cellular automata. By Lemma \ref{ACG Factorization}, there exist group homomorphisms $\phi : G \to K$ and $\psi : H \to K$ and unique $\C$-cellular automata $\alpha_K : C^K \to A^K$ and $\beta_K : C^K \to B^K$ such that the following diagram commutes:
        \[\begin{tikzcd}
            {A^G} && {C^K} && {B^H} \\
            & {A^K} && {B^K}
            \arrow["\alpha"', from=1-3, to=1-1]
            \arrow["\beta", from=1-3, to=1-5]
            \arrow["{\alpha_K}", from=1-3, to=2-2]
            \arrow["{\beta_K}"', from=1-3, to=2-4]
            \arrow["{\phi_A^*}", from=2-2, to=1-1]
            \arrow["{\psi_B^*}"', from=2-4, to=1-5]
        \end{tikzcd}\] 
     By Theorem \ref{th-prod}, there exists a unique $\C$-cellular automaton $\tau_K : C^K \to (A \times B)^K$ such that the following diagram commutes: 
             \[\begin{tikzcd}
            {A^G} && {C^K} && {B^H} \\
            & {A^K} && {B^K} \\
            && {(A\times B)^K}
            \arrow["\alpha"', from=1-3, to=1-1]
            \arrow["\beta", from=1-3, to=1-5]
            \arrow["{\alpha_K}"', from=1-3, to=2-2]
            \arrow["{\beta_K}", from=1-3, to=2-4]
            \arrow["{\tau_K}", from=1-3, to=3-3]
            \arrow["{\phi_A^*}", from=2-2, to=1-1]
            \arrow["{\psi_B^*}"', from=2-4, to=1-5]
            \arrow["{\pi_A^K}", from=3-3, to=2-2]
            \arrow["{\pi_B^K}"', from=3-3, to=2-4]
        \end{tikzcd}\] 

Now, we apply the pullback contravariant functor $(\_)_{A \times B}^*$ to the group homomorphisms in universal property of the coproduct in the category $\Grp$ (see Definition \ref{def-cop}) to obtain the following commutative diagram: 
        \[\begin{tikzcd}
            & {(A\times B)^K} \\
            {(A\times B)^G} && {(A\times B)^H} \\
            & {(A\times B)^{(G*H)}}
            \arrow["{\phi_{A\times B}^*}"', from=1-2, to=2-1]
            \arrow["{\psi_{A\times B}^*}", from=1-2, to=2-3]
            \arrow["{\gamma_{A\times B}^*}", from=1-2, to=3-2]
            \arrow["{(\iota_G)_{A\times B}^*}", from=3-2, to=2-1]
            \arrow["{(\iota_H)_{A\times B}^*}"', from=3-2, to=2-3]
        \end{tikzcd}\]
where $\gamma : G \ast H \to K$ is the unique group homomorphism  satisfying the universal property of the coproduct applied to $\phi$ and $\psi$.

By gluing the two previous diagrams, we obtain the following: 
        \[\begin{tikzcd}
            &&& {C^K} \\
            \\
            {A^G} && {A^K} && {B^K} && {B^H} \\
            & {(A\times B)^G} && {(A\times B)^K} && {(A\times B)^H} \\
            \\
            &&& {(A\times B)^{(G*H)}}
            \arrow["\alpha"', from=1-4, to=3-1]
            \arrow["{\alpha_K}"', from=1-4, to=3-3]
            \arrow["{\beta_K}", from=1-4, to=3-5]
            \arrow["\beta", from=1-4, to=3-7]
            \arrow["{\tau_K}", from=1-4, to=4-4]
            \arrow["{\phi_A^*}"', from=3-3, to=3-1]
            \arrow["{\psi_B^*}", from=3-5, to=3-7]
            \arrow["{\pi_A^G}", from=4-2, to=3-1]
            \arrow["{\pi_A^K}"', from=4-4, to=3-3]
            \arrow["{\pi_B^K}", from=4-4, to=3-5]
            \arrow["{\phi_{A\times B}^*}", from=4-4, to=4-2]
            \arrow["{\psi_{A\times B}^*}"', from=4-4, to=4-6]
            \arrow["{\gamma_{A\times B}^*}", from=4-4, to=6-4]
            \arrow["{\pi_B^H}"', from=4-6, to=3-7]
            \arrow["{(\iota_G)_{A\times B}^*}", from=6-4, to=4-2]
            \arrow["{(\iota_H)_{A\times B}^*}"', from=6-4, to=4-6]
        \end{tikzcd}\] 
        
    By Lemma \ref{TN estrella}, the left and right middle squares commute, so the whole diagram commutes. The result follows. 

    \end{proof}

The problem with showing that the weak product given in Theorem \ref{th-weak} is unique, and hence a product, is that the factorization of generalized $\C$-cellular automata given by Lemma \ref{ACG Factorization} is not unique in general. We say that $\tau : A^G \to B^H$ in $\GCA_\C$ has a \emph{unique factorization} if $\tau = \phi^*_B \circ \tau_G = \psi_B^* \circ \tau_G$ implies $\phi = \psi$. It was shown in \cite[Corollary 1]{Further}, that if $G$ is a torsion-free abelian group, then every non-constant generalized $\Set$-cellular automaton has a unique factorization; this result was generalized in \cite{SaloBlog} by removing the hypothesis of $G$ being abelian. However, note that constant generalized $\Set$-cellular automata never have a unique factorization: if $\tau : A^G \to B^H$ is constant, then $\tau = \phi^*_B \circ \tau_G = \psi_B^* \circ \tau_G$ for every pair of group homomorphisms $\phi$ and $\psi$. This implies that the weak product given in Theorem \ref{th-weak} is not a product in $\GCA_{\Set}$, but it is an open question if a product actually exists in $\GCA_{\Set}$. 


\section*{Acknowledgments}

The second author was supported by a SECIHTI \emph{Becas nacionales para estudios de posgrado}. The third author was supported by a SECIHTI research grant No. CBF-2023-2024-2630.

\bibliographystyle{plain}

\begin{thebibliography}{}

\bibitem[{S (2011)}]{concrete} Ad\'mek, J., Herrlich, H., Strecker, G.E.: Abstract and concrete categories: the joy of cats. Online Edition, 2004. \url{http://katmat.math.uni-bremen.de/acc/}

\bibitem[{S (2011)}]{Handbook}
Borceux, F.: Handbook of Categorical Algebra 1: Basic Category Theory, Cambridge University Press, Cambridge, 1994.

\bibitem[{S (2011)}]{Siltar}
Capobianco, S., Uustalu, T.: A categorical outlook on cellular automata. Journées Automates Cellulaires, 88--99 (2010). \url{https://hal.science/hal-00542015v1}

\bibitem[{S (2011)}]{Vaz2022}
Castillo-Ramirez, A., Sanchez-Alvarez, M., Vazquez-Aceves, A., Zaldivar-Corichi, A.: A generalization of cellular automata over groups. Comm. Algebra \textbf{51}(7), 3114--3123 (2023). \url{https://doi.org/10.1080/00927872.2023.2177663}

\bibitem[{S (2011)}]{Further}
Castillo-Ramirez, A., de~los Santos Ba\~{n}os, L.: Further results on generalized cellular automata. Comm. Algebra \textbf{52}(6), 2475--2488 (2024). \url{ https://doi.org/10.1080/00927872.2023.2301538}

\bibitem[{S (2011)}]{CA-prod}
Castillo-Ramirez, A., Vazquez-Aceves, A., Zaldivar-Corichi, A. (2026). Categorical Products of Cellular Automata. In: Riva, S., Richard, A. (eds) Cellular Automata and Discrete Complex Systems. AUTOMATA 2025. Lecture Notes in Computer Science \textbf{15831}. Springer, Cham. \url{https://doi.org/10.1007/978-3-032-01570-9_5}

\bibitem[{S (2011)}]{Cec2010}
Ceccherini-Silberstein, T., Coornaert, M.: Cellular Automata and Groups. Springer Monographs in Mathematics. Springer Cham, Switzerland, 2nd edition, 2023.

\bibitem[{S (2011)}]{Cec2013}
Ceccherini-Silberstein, T., Coornaert, M.: Surjunctivity and reversibility of cellular automata over concrete categories. In: Picardello, M. (eds.) Trends in Harmonic Analysis. Springer INdAM Series, vol 3. Springer, Milano. \url{https://doi.org/10.1007/978-88-470-2853-1_6}

\bibitem[{S (2011)}]{Kan1}
Fernandez, A., Maignan, L., Spicher, A.: Cellular Automata and Kan Extensions. In: Castillo-Ramirez, A., Guillon, P., Perrot, K. (eds.) 27th IFIP WG 1.5 International Workshop on Cellular Automata and Discrete Complex Systems (AUTOMATA 2021). Open Access Series in Informatics (OASIcs) \textbf{90}, pp. 7:1--7:12, Schloss Dagstuhl – Leibniz-Zentrum für Informatik, 2021. \url{ https://doi.org/10.4230/OASIcs.AUTOMATA.2021.7}

\bibitem[{S (2011)}]{Kan2}
Maignan, L., Spicher, A.: Causal Graph Dynamics and Kan Extensions. In: Harmer, R., Kosiol, J. (eds.) Graph Transformation. ICGT 2024. Lecture Notes in Computer Science \textbf{14774}, Springer, Cham, 2024. \url{https://doi.org/10.1007/978-3-031-64285-2_5}

\bibitem[{S (2011)}]{Sau1970}
Mac~Lane, S.: Categories for the Working Mathematician. Graduate Texts in Mathematics. Springer, New York, 2nd ed., 2010.

\bibitem[{S (2011)}]{RomanCat}
Roman, S.: An Introduction to the Language of Category Theory. Compact Textbooks in Mathematics. Birkhäuser Cham, Switzerland, 2017.

\bibitem[{S (2011)}]{Ville2014}
Salo, V., Törmä, T.: Category theory of symbolic dynamics. Theoret. Comput. Sci. \textbf{567}, 21--45 (2015). \url{https://doi.org/10.1016/j.tcs.2014.10.023}

\bibitem[{S (2011)}]{SaloBlog}
Salo, V.: About last weeks (2/n): A group-theoretic take on generalized cellular automata, Shift Happens (blog), 2024. \url{https://symbolicdynamicsandotherthings.wordpress.com/2024/03/12/about-last-weeks-2-n-a-group-theoretic-take-on-generalized-cellular-automata/}, last accessed 2025/03/10.

\bibitem[{S (2011)}]{IntroCat}
Simmons, H.
\newblock \emph{An introduction to category theory}. 
\newblock Cambridge University Press, Cambridge, 2011.
	
\end{thebibliography}

\end{document}